\let\Hung=\H
\def\H{\mathop{\hbox{\bf H}}}
\def\qed{$\square$}
\let\eps\varepsilon
\let\phi\varphi
\newtheorem{theorem}{Theorem}[section]
\newtheorem{claim}[theorem]{Claim}
\newtheorem{lemma}[theorem]{Lemma}
\newtheorem{DDDefinition}[theorem]{Definition}
\def\enddefinition{\end{DDDefinition}\egroup\medbreak}
\def\proof{\begin{demo}{Proof}}
\def\endproof{\enspace\hfill\qed\end{demo}\medbreak}
\def\definition{\bgroup\def\@begintheorem##1##2{\trivlist
  \item[\hskip\labelsep{\bfseries ##1\ ##2}]}\begin{DDDefinition}}
\newcounter{scheme}\renewcommand\thescheme{\@arabic\c@scheme}
\def\fpd@scheme{tpb}\def\ftype@scheme{1}\def\ext@scheme{lof}
\def\fnum@scheme{Scheme\nobreakspace\thescheme}
\newenvironment{floatscheme}{\@float{scheme}}{\end@float}
\newenvironment{demo}[1]
{\par\medbreak\noindent{\bf #1\enskip}\rm\ignorespaces}{}
\def\hangitem#1{\item[\hskip-\leftmargin\enspace\textbf{#1}]\relax}
\newenvironment{hang}[1][]{\list{}{\setlength\topsep{3pt}\setlength\itemsep{0pt}\setlength\labelwidth{0pt}}\hangitem{#1}}{\endlist}
\long\def\scheme#1#2#3\endscheme{\begin{floatscheme}[htb]\begin{center}\setlength\fboxsep{1.8\fboxsep}%
\fbox{\parbox{0.9\textwidth}{\slshape #3}}\end{center}\vskip -8pt\caption{#1}\label{#2}\end{floatscheme}}
\title{\bf On-line secret sharing\thanks{This research was supported by
the ``Lend\"ulet Program'' of the Hungarian Academy of Sciences.}}
\author{L\'aszl\'o Csirmaz\thanks{Central European
University, Budapest. This research was partially supported by grant NKTH OM-00289/2008}
 \and G\'abor Tardos\thanks{School of Computing Science, Simon Fraser
   University and R\'enyi Institute, Budapest. This research was partially
   supported by NSERC Discovery grant and Hungarian OTKA grants T-046234,
   AT-048826 and NK-62321}}
\date{}
\begin{document}
\maketitle

\begin{abstract}

In a perfect secret sharing scheme the dealer distributes shares to
participants so that qualified subsets can recover the secret, while
unqualified subsets have no information on the secret. In an on-line secret
sharing scheme the dealer assigns shares in the order the participants show 
up, knowing only those qualified subsets whose all members she has seen.
We often assume that the overall access structure (the set of minimal
qualified subsets) is known and
only the order of the participants is unknown. On-line secret sharing is a
useful primitive when the set of participants grows
in time, and redistributing the secret when a new participant shows up is
too expensive. In this paper we start the investigation of unconditionally
secure on-line secret sharing schemes.

The complexity of a secret sharing scheme is the size of the largest share a
single participant can receive over the size of the secret. The infimum of
this amount in the on-line or off-line setting is the on-line or off-line
complexity of the access structure, respectively.

For paths on at most five vertices and cycles on at most six vertices the
on-line and offline complexities are equal, while for other paths and cycles
these values differ. We show that the gap between these values can be
arbitrarily large even for graph based access structures.

We present a general on-line secret sharing scheme that we call first-fit.
Its complexity is the maximal degree of the access structure. We show,
however, that this on-line scheme is never optimal: the on-line
complexity is always strictly less than the maximal degree. On the other
hand, we give examples where the first-fit scheme is almost optimal,
namely, the on-line complexity can be arbitrarily close to the maximal
degree.

The performance ratio is the ratio of the on-line and off-line complexities of
the same access structure. We show that for graphs the performance ratio is
smaller than the number of vertices, and for an infinite family of graphs
the performance ratio is at least constant times the square root of the
number of vertices.

{\bf Keywords:} secret sharing; online algorithm; complexity; entropy
method; performance ratio.

\end{abstract}

\section{Introduction}\label{sec:intro}

Secret sharing is an important cryptographic primitive. It is used, for
example, in protocols when individual participants are either unreliable,
or participating parties don't trust each other, while they together want
to compute reliably and secretly some function of their private data. Such
protocols are, among others, electronic voting, bidding, data base access
and data base computations, distributed signatures, or joint encryptions.
Search for (efficient) secret sharing schemes led to problems in several
different branches of mathematics, and a rich theory has been developed. 
For an extended bibliography on secret sharing see \cite{stinson-wei}.

Secret sharing is a method to hide a piece of information -- the {\it
secret}
-- by splitting it up into pieces, and distributing these shares among
participants so that it can only be recovered from certain subsets of the
shares. Usually it is a trusted outsider -- the {\it dealer} -- who produces
the shares and communicates them privately to the participants. Thus to define
a secret sharing scheme we need to describe what the dealer should do.

As schemes can easily be scaled up by executing several instances
independently, the usual way to measure the efficiency of a scheme is to look
at the ratio between the size of the largest share any participant receives and
the size of the secret. The size of the shares and that of the secret is
measured by their entropy, which is roughly the minimal expected number of
bits which are necessary to define the value uniquely. We write $\H(\xi)$
to denote the Shannon entropy of the random variable $\xi$
\cite{book:csiszar}.

Let $P$ denote the set of participants. We assume that
both the secret $\xi_s$ and the share $\xi_i$ assigned to a participant $i\in
P$ are random variables distributed over a finite range and all these
variables have a joint distribution. We further require that $\H(\xi_s)>0$ to
avoid trivialities. The dealer simply draws the secret and the shares randomly
according to the given distribution, and then distributes the (random)
values of the shares to the participants.
The {\em complexity} (or worst case complexity) of the scheme $\mathcal S$,
denoted by $\sigma(\mathcal S)$ is simply the ratio between the size of the
largest share and size of the secret:
$$
  \sigma(\mathcal S) = \frac{\max_{i\in P}\H(\xi_i)}{\H(\xi_s)}.
$$
The inverse of the complexity is dubbed as the {\em rate of the scheme}, in a
strong resemblance to the decoding rate of noisy channels.

We call a hypergraph $\Gamma$ on the vertex set $P$
an {\em access structure}. A subset of the participants is {\em
qualified} if it contains a hyperedge and it is {\em unqualified} otherwise. 
We say that the secret 
sharing scheme $\mathcal S$ {\em realizes} $\Gamma$ if the values of the
shares of the participants in any qualified set uniquely determine the value of
the secret, but the shares of a set of the participants in an unqualified
subset are statistically independent of the
secret. Clearly, the non-minimal hyperedges in $\Gamma$ play no role in
defining which sets are qualified, so we can and will assume that the
hyperedges in $\Gamma$ form a {\it Sperner system} \cite{bollobas}, i.e., no
hyperedge contains another hyperedge. We further assume
that the empty set is not a hyperedge as otherwise no scheme would
realize $\Gamma$.

The {\em complexity} of
$\Gamma$ is the infimum of the complexities of all schemes realizing
$\Gamma$:
$$
     \sigma(\Gamma) = \inf \{ \sigma(\mathcal S) \,:\,
  \mathcal S \mbox{ realizes } \Gamma \},
$$
this notation was introduced in \cite{marti-padro}.
By the result of Ito {\em et al.} \cite{ito}, every non-trivial Sperner system
has a complexity, i.e., every access structure is realized by some scheme. The
complexity of their construction is the maximal degree of $\Gamma$. The
{\em degree} of a vertex in a hypergraph is the number of hyperedges
containing it. The {\em maximal degree} of $\Gamma$, denoted by $d(\Gamma)$, 
is the
maximum of the degrees of vertices of $\Gamma$. The complexity of the scheme
realizing $\Gamma$ can be reduced from $d=d(\Gamma)$ to $d - (d-1)/n$, where
$n$ is the number of participants. Another general
construction for arbitrary access structure is given by Maurer \cite{maurer}.
It is, in a sense, a dual construction, and its complexity is the maximal
number of maximal unqualified subsets a certain participant is {\it not} a
member of. Both type of constructions show that the complexity of any access
structure is at most exponential in the number $n$ of participants.
It is an open problem whether there exists an access structure with
$\sigma(\Gamma) \ge n$.

A simple observation yields that $\sigma(\Gamma)\ge1$ for
all access structures $\Gamma$ with at least one hyperedge, see, e.g.,
\cite{capocelli}. Access
structures with complexity exactly $1$ are called {\em ideal}. An intense
research was conducted to characterize ideal access structures. For example,
results in \cite{marti-padro} connect the problem of characterizing ideal
access structures to representability of certain matroids.

A widely studied special case is when all minimal qualified sets are pairs,
that is, the access structure is a graph. Stinson \cite{stinson} showed that
the complexity of a graph $G$ is at most
$(d+1)/2$ where $d$ is the maximal degree of the graph. This, together with
the lower bound in \cite{capocelli} established the complexity of
both the path and the cycle of length $n>4$ to be $3/2$. Blundo {\em et
al.}
in~\cite{tight}
showed that the $(d+1)/2$ bound is tight for certain $d$-regular graph 
families. Lower and upper
bounds on the complexity on graphs with a few nodes were investigated in
\cite{dijk}. The complexity of trees was determined in
\cite{csirmaz-tardos} to be $2-1/c$ where $c$ is the size 
of the largest
core\footnote{A {\em core} is a connected subtree such that each vertex in
the core is connected to a vertex not in the core.} in the tree. 
In particular, the complexity of every tree is strictly less than $2$.

On the other hand, based on the result of Erd\Hung os and Pyber \cite{erdos-pyber}, 
Blundo {\em et al.} \cite{BSGV} show that the
complexity of any graph $G$ on $n$ vertices is $O( n/\log n)$.
So far, however, no graph has been
found with complexity above $\Theta(\log n)$. The graph with the largest known
complexity (as a function of the number of vertices) is from \cite{d-cube},
namely the edge-graph of the $d$ dimensional hypercube. This graph is
$d$-regular, has $2^d$ vertices, and its complexity is $d/2$.

\subsection{On-line secret sharing}

In the model discussed so far the dealer generates all shares simultaneously,
and communicates them to the corresponding participants. We call such schemes
{\em off-line}. In the {\em
on-line share distribution} participants form a queue, and they receive their
shares in the order they appear. When a participant arrives
the dealer is told all those qualified subsets which are formed by
this and previously seen participants. We often assume that the dealer knows
the entire access structure at the beginning but she doesn't know the
order in which the participants arrive or the identity of the participant
when he arrives. Still she has to assign a share to him and she
cannot modify this share later. In this respect on-line secret sharing
resembles on-line graph coloring: there the color of the next vertex should be
decided knowing only that part of the graph which is spanned by this and
previous vertices.

On-line secret sharing is a useful primitive when the set of participant is not
fixed in advance and shares are assigned as participants show up. The usual
way to handle such cases is by redistributing all shares every time a new
participant shows up. Redistribution, however, has high cost, while using
on-line secret sharing can be cheap and efficient.

The study of on-line schemes stems from mainly theoretical interest. It is a
quite natural extension, and definitely more powerful than the traditional
static schemes. If we can provide a more powerful tool without giving up too
much from the efficiency, then why should we settle for less?
On the other hand, if on-line schemes turn out to be prohibitively
complicated, then we should discard them as interesting but unpractical. As
our results show, for general access structures when each participant is in
a relatively small number of qualified subsets (say, less than ten) which is
a reasonable assumption, then independently of the total number of the
participants, the complexity of the system, both on-line and off-line, has a
very low complexity (below 10). The best common bounds for the complexity of 
on-line and off-line schemes are quite close. Thus, if the access structure 
has no special properties, the efficiency of an on-line scheme is not 
inferior to that of the more restricted off-line scheme. Of course, there
are special structures -- and we will present some of them -- where the
efficiencies are quite far away from each other.

In our work we took the analogy with graph coloring. There is a
quite extensive literature for on-line graph coloring, see the bibliography
in \cite{amiller}. On-line graph coloring is interesting both for its
theoretical and practical aspects. We hope the same is, or will be, true 
for the present investigation.

The {\em on-line secret sharing} of Cachin \cite{cachin95} and follow-up
papers differ from 
our approach significantly. Cachin's model considers computationally 
secure schemes only, 
while our schemes are unconditionally secure. In addition, it requires other
authentic (but not secret) publicly accessible information, which can (or
should) be
broadcast to the participants over a public channel. In our schemes only
information possessed by the participants is necessary to recover the
secret. We are mainly interested in proving lower and upper bounds on 
the complexity of such schemes compared to the complexity of unconditionally
secure off-line
schemes, which are not touched in \cite{cachin95} at all.

Dynamic access structures were investigated by Blundo {\em et al.} in
\cite{dynamic}. Their model provides unconditional security, and the dealer
is able to activate a particular access structure out of a given collection
by sending an appropriate {\em broadcast} message to all participants. The 
dynamic is provided by the dealer's ability to choose from a range of
possible access structures, while in our on-line schemes it is the
unpredictability of the order participants appear in which makes the 
scheme dynamic.

\subsection{Our contribution}

On-line secret sharing appeared first in the
conference presentation \cite{on-line}.
In this paper we give a precise definition of this notion and define the {\em
on-line complexity} $o(\Gamma)$ of an access structure $\Gamma$ as the infimum
of the complexity of an on-line secret sharing scheme realizing it. We
present a general on-line secret sharing scheme that can realize any access
structure. We call our scheme the {\em first-fit on-line secret sharing scheme}
on account of its similarity to the simplest on-line graph coloring strategy.

\begin{theorem}\label{thm:1.1}
The on-line secret sharing scheme {\em first-fit} realizes any access structure
$\Gamma$ with complexity $d=d(\Gamma)$. In particular, $o(\Gamma)\le
d$.\end{theorem}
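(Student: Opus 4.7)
The plan is to exhibit a specific on-line scheme, which I call \emph{first-fit}, and to verify three things: that the dealer's rule uses only information available when the new participant arrives, that qualified subsets recover the secret while unqualified subsets learn nothing, and that each participant receives at most $d$ group elements, hence at most $d\cdot\H(\xi_s)$ bits of share. I would take the secret $\xi_s$ uniform over a finite abelian group $G$. For every minimal qualified set $A\in\Gamma$, the dealer will give each $q\in A$ an ``$A$-component'' $\xi^A_q\in G$ with $\sum_{q\in A}\xi^A_q=\xi_s$, using fresh independent randomness for distinct $A$. When a new participant $p$ arrives, for every $A\in\Gamma$ with $p\in A$ the dealer does the following: if some member of $A$ has not yet shown up, she picks $\xi^A_p$ uniformly at random in $G$; otherwise $p$ is the last arrival in $A$ and she is forced to set $\xi^A_p=\xi_s-\sum_{q\in A,\,q\ne p}\xi^A_q$. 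The share of $p$ is the tuple $(\xi^A_p)_{A\ni p,\,A\in\Gamma}$.

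The rule refers only to $\Gamma$ (known in advance) and to the set of participants seen so far (known when $p$ arrives), so the scheme is genuinely on-line. Participant $p$ receives one element of $G$ per set $A\in\Gamma$ containing $p$, i.e.\ $\deg_\Gamma(p)\le d$ elements, so $\H(\xi_p)\le d\,\log|G|=d\,\H(\xi_s)$ and the complexity of the scheme is at most $d$.

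For correctness I would fix any $A\in\Gamma$ and observe that, by construction, $(\xi^A_q)_{q\in A}$ is a perfect additive $|A|$-out-of-$|A|$ sharing of $\xi_s$: the $|A|-1$ components of all but the last arrival are independent uniform elements of $G$, and the last one is forced. In particular, removing any single component leaves a tuple whose joint distribution is independent of $\xi_s$. Because the dealer uses independent randomness for distinct sets $A$, the tuples $(\xi^A_q)_{q\in A}$ for different $A$'s are mutually independent. If a subset $B\subseteq P$ contains some minimal qualified $A$, then $B$ sees the entire tuple $(\xi^A_q)_{q\in A}$ and recovers $\xi_s$ by summing. If $B$ is unqualified, then $B\cap A\subsetneq A$ for every $A\in\Gamma$, so $B$'s view of each $A$-block is a proper sub-tuple (hence independent of $\xi_s$); combining with independence across different sets $A$ yields that the entire view of $B$ is independent of $\xi_s$.

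The one place where care is needed is the unqualified case, since different minimal qualified sets can overlap heavily and one might fear that dependencies between overlapping blocks leak information. This concern is dissolved by insisting that the dealer's random bits used for each $A$ be independent, which reduces the privacy analysis to $|\Gamma|$ independent instances of the standard argument for additive (XOR) sharing.
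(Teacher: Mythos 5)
Your algebra is the right one --- per-hyperedge additive sharing with independent randomness across hyperedges, which is exactly the engine of the paper's first-fit scheme --- and your privacy argument for the resulting joint distribution is correct. The gap is that the dealer's rule, as you state it, is not executable in the on-line model. When $p$ arrives the dealer is \emph{not} told the identity of $p$ as a vertex of $\Gamma$, nor the list of hyperedges $A\in\Gamma$ with $p\in A$; she is told only the \emph{backward} hyperedges, those whose every member has already appeared. Your instruction ``for every $A\in\Gamma$ with $p\in A$, if some member of $A$ has not yet shown up, pick $\xi^A_p$ uniformly'' quantifies over the forward hyperedges at $p$, which are revealed only later. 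Even the number of group elements you hand to $p$, namely $\deg_\Gamma(p)$, is unknown at that moment (only the backward degree $m$ is known). For $P_3$, for instance, the first arrival could be any of the three vertices and the emerging hypergraph is empty in all cases, so the dealer cannot label components by hyperedges containing $p$.

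The repair is the paper's ``anticipation'' device. Give $p$ one forced element for each of its $m$ backward hyperedges (computed exactly as in your scheme, by closing the additive sharing of that hyperedge), plus $d-m$ fresh, independent, \emph{unlabelled} random elements. The binding of an unlabelled element of a participant $q$ to a hyperedge $A\ni q$ is deferred to the moment $A$ is revealed, i.e., when its last member arrives; at that point the dealer picks, for each earlier member $q\in A$, one of $q$'s still-unassigned elements to serve as $\xi^A_q$, and forces the newcomer's element so that the sum over $A$ equals the secret. Since every vertex lies in at most $d$ hyperedges, no participant runs out of unassigned elements; every participant receives exactly $d$ elements (so the complexity is $d$, not merely the $\deg_\Gamma(p)$ of your count), and your correctness and privacy analysis then applies verbatim to the resulting distribution.
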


As usual, $P_n$ denotes the path on $n$ vertices, and $C_n$ denotes the
cycle on $n$ vertices. It is well known that the complexity of $P_n$ is
$3/2$ for $n\ge4$ and complexity of $C_n$ is also $3/2$ for $n\ge5$, see,
e.g., 
\cite{capocelli}. The following theorem separates the on-line and off-line
complexities.

\begin{theorem}\label{thm:1.2}
{\rm(i)} For paths $P_n$ with $n\le5$ and for the cycles $C_n$ with $n\le6$
the on-line and off-line complexity is the same.

\noindent
{\rm(ii)}
For paths $P_n$ with $n\ge6$ and for cycles $C_n$ with $n\ge7$
the on-line complexity is strictly above the off-line complexity.

\noindent
{\rm(iii)} The on-line complexity of both $P_n$ and $C_n$ tends to 
$2$ as $n$ tends to infinity. In fact, 
$$
    2-\frac{1}{4n}\ge o(C_{n+1})\ge o(P_n)\ge2-\frac{4}{n}.
$$
\end{theorem}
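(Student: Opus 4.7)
Parts (i) and (ii) are handled together. Since $o(\Gamma)\ge\sigma(\Gamma)$ holds trivially, for (i) it suffices to exhibit, for each listed small structure, an on-line scheme matching the known off-line complexity: the ideal ones ($P_2,P_3,C_3,C_4$) admit a one-piece-share construction that is automatically on-line, and the remaining cases ($P_4,P_5,C_5,C_6$, all of off-line complexity $3/2$) are handled by hand through a finite enumeration of arrival orders. For (ii), the asymptotic lower bound of (iii) gives $o(P_n)>3/2$ only once $n\ge 9$; the small residual cases $n\in\{6,7,8\}$ require a dedicated argument, which I would attack by a small-$n$-specific entropy inequality ruling out on-line schemes of complexity exactly $3/2$. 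Once the paths are settled, the cycles $C_7,C_8,C_9$ follow via the monotonicity $o(C_{n+1})\ge o(P_n)$ established below.

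For the middle inequality of (iii), $o(C_{n+1})\ge o(P_n)$, I would use a short simulation argument. View $P_n$ as $C_{n+1}$ minus a single vertex $v_0$, and simulate any on-line scheme $\mathcal S$ for $C_{n+1}$ on $P_n$ by feeding the user's chosen arrivals for $P_n$ directly to $\mathcal S$, pretending that $v_0$ is always still to come. Every edge of $P_n$ is an edge of $C_{n+1}$, so qualified pairs still recover the secret; every unqualified subset of $P_n$ avoids $v_0$ and therefore avoids the only two edges of $C_{n+1}\setminus P_n$, so privacy is preserved. The simulated scheme has complexity at most that of $\mathcal S$, and taking infima yields the inequality. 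For the upper bound $o(C_{n+1})\le 2-1/(4n)$, I would improve the first-fit construction of Theorem~\ref{thm:1.1}, which already gives exactly the trivial $d=2$, by a $\Theta(1/n)$ additive correction: replace the generic double-size share at one distinguished arrival by a slightly more economical composite piece, and amortise the saving across the $n+1$ vertices of the cycle.

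The main obstacle is the lower bound $o(P_n)\ge 2-4/n$. The plan is to fix an adversarial arrival order for $P_n$ maximising the number of vertices that arrive in isolation, and to extract an entropy inequality from the resulting scheme. A natural choice is to reveal the odd-indexed vertices $v_1,v_3,v_5,\dots$ first and only then the even-indexed ones: each odd interior vertex arrives with no edge visible, so the on-line dealer must commit to a share compatible with both of its still-unseen potential neighbours. The on-line constraint fixes each share's conditional distribution based solely on the history at its arrival time; combining this with the standard submodularity-plus-independence techniques used in off-line lower bounds (as in \cite{capocelli,csirmaz-tardos}) should force the average share to carry entropy close to $2\H(\xi_s)$, with a $\Theta(1/n)$ deficit arising from the two endpoints $v_1,v_n$ that each have only one neighbour to accommodate. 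The delicate step, where I expect the bulk of the technical work to lie, is writing down a clean entropy inequality that captures both the on-line commitment and the Sperner-type recovery/privacy constraints sharply enough to extract the precise constant $2-4/n$.
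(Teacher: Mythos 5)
Your proposal is an accurate road map but not a proof: every substantive estimate in the theorem is left as a declared intention. The most serious gap is the lower bound $o(P_n)\ge 2-4/n$ (and the separation $o(P_6)>3/2$ needed for part (ii)), where you correctly sense that the dealer's inability to distinguish vertices that arrive with no visible edges is the source of the loss, but you stop exactly where the work begins. The paper's mechanism is a symmetrization principle (Theorem~\ref{on-ent}(ii)): from an on-line scheme one extracts an $o(\Gamma)$-bounded entropy function that can be \emph{averaged over the automorphisms of any induced substructure}, because isomorphic arrival histories force identical share assignments. Applying this to the substructure consisting of a $2$-edge star plus $m\approx n/2$ isolated vertices (an induced subgraph of $P_n$ obtained by taking three consecutive vertices and then every other remaining vertex), one gets an entropy function whose value on a set of isolated vertices depends only on its cardinality; a telescoping sum of $m+1$ chains of submodularity inequalities then yields $o\ge d-d^3/(2m)$ (Theorem~\ref{star}), i.e.\ $2-O(1/n)$. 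Your heuristic that the $\Theta(1/n)$ deficit ``arises from the two endpoints'' is wrong: it arises from dividing a constant entropy surplus among the $\Theta(n)$ indistinguishable isolated vertices. Likewise, part (ii) is not a matter of ``small residual cases $n\in\{6,7,8\}$'': only $n=6$ needs a dedicated argument (the rest follow by induced-subgraph monotonicity), and that argument is again the symmetrization theorem applied to the $2$-edge matching inside $P_6$, giving the nontrivial bound $o(P_6)\ge 7/4$; no generic ``entropy inequality ruling out complexity exactly $3/2$'' is known to exist without this symmetrization step.

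The upper bound $o(C_{n+1})\le 2-1/(4n)$ is also essentially unproved in your sketch. The paper's construction (Theorem~\ref{thm:graph-tight}) runs $k=dn$ \emph{parallel copies} of an on-line star double-cover scheme and pre-distributes $d$ shared random ``special leaves'' to every pair of not-yet-saturated vertices, so that a vertex arriving with $d$ backward edges can merge two of its centers into a single two-leaf star and save one field element out of $2dk$; the saving is amortized over the $k$ copies (not ``across the $n+1$ vertices of the cycle''), and the choice $k=dn$ is forced by the need to absorb the $d(n-1)$ extra special leaves carried by the low-degree vertices. Nothing in your one-sentence description of a ``more economical composite piece'' produces this. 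Two parts of your proposal are sound and match the paper: the simulation argument for $o(C_{n+1})\ge o(P_n)$ (induced-substructure monotonicity of on-line complexity) is correct as you state it, and the reduction of part (i) to the ideal cases plus the $3/2$ cases is the right skeleton --- though for $C_5$ the paper invokes full symmetry (Claim~\ref{claim:threshold}) and for $C_6$ it must exhibit an explicit on-line scheme (Scheme~\ref{scheme:C6}, built from two overlapping share-labelled $6$-cycles); a ``finite enumeration of arrival orders'' verifies such a scheme but does not construct it.
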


%The gap between the on-line and off-line complexities can be
%arbitrarily large.
Recall that, by \cite{csirmaz-tardos}, the complexity of 
a tree is below $2$.

\begin{theorem}\label{thm:1.3}
The on-line complexities of trees is unbounded. In particular, there exists an
$n$-vertex tree $T_n$ with $o(T_n)\ge\lfloor\sqrt n\rfloor/2$. Consequently the
gap between $o(\Gamma)$ and $\sigma(\Gamma)$ can be arbitrarily large.
\end{theorem}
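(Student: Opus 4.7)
\begin{demo}{Proof sketch}
My plan is to exhibit, for each sufficiently large $n$, a tree $T_n$ on $n$ vertices with on-line complexity at least $\lfloor\sqrt n\rfloor/2$. Since every tree has off-line complexity below $2$ by \cite{csirmaz-tardos}, the same family will immediately witness an unbounded gap between $\sigma$ and $o$.

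\emph{Construction.} Pick the largest $k$ with $1+k+k^2\le n$ (so $k\ge\lfloor\sqrt n\rfloor-1$) and let $T_n$ be the two-level tree consisting of a central vertex $c$ joined to middle vertices $v_1,\dots,v_k$, each $v_i$ joined to $k$ further leaves $w_{i,1},\dots,w_{i,k}$, padded with a few extra pendant leaves at $v_1$ so that $|V(T_n)|=n$. The maximum degree is $k+1$, so Theorem~\ref{thm:1.1} already gives the essentially matching upper bound $o(T_n)\le k+1$.

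\emph{Lower bound strategy.} Fix an on-line scheme $\mathcal S$ realizing $T_n$. The adversary reveals all $k^2$ leaves first (in an arbitrary order), then $v_1,\dots,v_k$, and finally $c$. The automorphism group of $T_n$ acts transitively on the leaves and permutes the $v_i$'s together with their leaf blocks; after symmetrizing $\mathcal S$, we may assume the joint distribution of the first $k^2$ shares is exchangeable and the whole distribution is invariant under permutations of the index $i$. During the leaf phase no edges are revealed, so leaf shares depend only on $\xi_s$ and shared randomness, ignorant of the upcoming grouping.

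I then combine three ingredients: (i)~each edge $\{v_i,w_{i,j}\}$ forces $(\xi_{v_i},\xi_{w_{i,j}})$ to determine $\xi_s$ while $\xi_{w_{i,j}}$ alone is independent of $\xi_s$; (ii)~the cross-block non-edges $\{v_i,w_{j,\ell}\}$ with $i\ne j$ are unqualified and hence carry no information about $\xi_s$; (iii)~$\{c,v_i\}$ is qualified whereas $\{c,w_{i_1,j_1},\dots,w_{i_t,j_t}\}$ remains unqualified for every choice of leaves. Chain rule together with the identity $\H(\xi_{v_i}\mid\xi_c,\xi_{v_1},\dots,\xi_{v_{i-1}})=\H(\xi_{v_i}\mid\xi_c)-\H(\xi_s)$ for $i\ge 2$ (which exploits that $\xi_s$ is a function of $\xi_c,\xi_{v_1}$) should yield an inequality of the shape
$$\H(\xi_c)\ \ge\ k\,\H(\xi_s)-\Bigl(\sum_{i=1}^k\H(\xi_{v_i})-\H(\xi_{v_1},\dots,\xi_{v_k})\Bigr).$$
The correction term on the right measures the mutual correlation among the $\xi_{v_i}$'s, which I would bound using the exchangeability of the leaf distribution and the cross-block independence constraints. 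Once this correction is shown to be at most $(k/2)\,\H(\xi_s)$, we conclude $\max\bigl\{\H(\xi_c),\max_i\H(\xi_{v_i})\bigr\}\ge(k/2)\,\H(\xi_s)$, i.e., $o(T_n)\ge k/2\ge\lfloor\sqrt n\rfloor/2$.

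\emph{Main obstacle.} The delicate point is controlling the dealer's ability to correlate the shares $\xi_{v_i}$ across blocks via shared randomness introduced during the leaf phase, which could in principle shrink $\xi_c$. Showing that the cross-block unqualified constraints force enough ``transverse'' entropy among the $\xi_{v_i}$'s---and extracting the clean constant $1/2$---is the technical heart of the argument. The consequence $o(T_n)-\sigma(T_n)\to\infty$ follows immediately, since $\sigma(T_n)<2$ for every tree.
\end{demo}
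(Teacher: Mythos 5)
There is a genuine gap. Your construction is fine and is in the same spirit as the paper's: the paper takes a $d$-edge star plus $m$ isolated vertices, attaches one extra vertex to make it a tree, and invokes Theorem~\ref{star} with $d=\lfloor\sqrt n\rfloor$, $m=n-d-2$; your two-level tree likewise contains, as an induced subgraph, a $k$-edge star (one $v_i$ with its leaves) together with $\approx k^2-k$ indistinguishable isolated vertices (the other blocks' leaves), so the reduction $o(T_n)\ge o(\mbox{star}+\mbox{isolated vertices})$ would finish the job \emph{if} you had the quantitative lower bound for that configuration. That bound is precisely Theorem~\ref{star}, $o(G)\ge d-d^3/(2m)$, and its proof is the technical heart of the whole theorem: a chain of $2d+1$ strict submodularity/monotonicity inequalities applied to an entropy function made symmetric on the empty induced subgraph (via Theorem~\ref{on-ent}(ii)), telescoped over all $m+1$ sizes of the "padding" set $H$. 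Nothing in your sketch replaces this.

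Your substitute argument does not work as stated. The displayed inequality $\H(\xi_c)\ge k\H(\xi_s)-\bigl(\sum_i\H(\xi_{v_i})-\H(\xi_{v_1},\dots,\xi_{v_k})\bigr)$ involves only $c$ and the $v_i$'s and never touches the leaves, so it holds verbatim for an \emph{off-line} scheme on the star $K_{1,k}$ --- which is complete bipartite and has complexity $1$. Hence the correction term can genuinely be as large as $(k-1)\H(\xi_s)$, and bounding it by $(k/2)\H(\xi_s)$ is exactly where the entire on-line content must enter; you flag this as "the technical heart" but give no argument, and it is not clear one exists along these lines, since the dealer assigns the $\xi_{v_i}$ \emph{after} seeing each $v_i$'s backward neighborhood and is free to correlate them. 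The on-line constraint actually bites at the leaf phase (a leaf cannot know which block it will join), which is what the paper's symmetric entropy function on the set of non-center vertices captures and your inequality does not. Two smaller issues: the claimed identity $\H(\xi_{v_i}\mid\xi_c,\xi_{v_1},\dots,\xi_{v_{i-1}})=\H(\xi_{v_i}\mid\xi_c)-\H(\xi_s)$ is only an inequality ($\le$), and "symmetrizing $\mathcal S$" at the level of distributions needs justification --- the paper avoids this by averaging entropy functions over automorphisms rather than schemes. To repair the proof, either prove the analogue of Theorem~\ref{star} for your configuration or cite it directly; the final reduction and the appeal to $\sigma(T_n)<2$ are then immediate, as in the paper.
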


The {\em performance ratio} tells us how much worse the on-line scheme must be
compared to the best off-line scheme. The {\em secret sharing
performance ratio of
$\Gamma$} is defined to be $o(\Gamma)/\sigma(\Gamma)$. The similarly defined
quantity for on-line graph coloring is
sublinear in the number of vertices \cite{lovasz}, and it is at least
$n/\log^2 n$ for certain graphs with $n$ vertices \cite{amiller}. Our upper
bound on the secret sharing performance ratio of graphs comes from an upper
bound of the on-line complexity and the trivial lower bound of $1$ for the
off-line complexity:

\begin{theorem}\label{thm:1.4}
{\rm(i)} Let $d=d(G)$ be the maximal degree of the graph $G$ on $n$
vertices. Then $o(G)$, and therefore the secret sharing
performance ratio, is at most $d-1/(2dn)$.

\noindent
{\rm(ii)}
For some graphs on $n$ vertices the performance 
ratio is at least $\frac13\sqrt n$.
\end{theorem}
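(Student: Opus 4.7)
For part (i), the approach is to refine the first-fit scheme from Theorem~\ref{thm:1.1} (which gives $o(G)\le d$) by exploiting a small amount of redundancy. The basic tool is a ``merge'' operation: given a vertex $v$ and two of its non-adjacent neighbors $w_1, w_2$, one can give $w_1, w_2$ the same uniformly random value $r$ for their edges to $v$ and replace $v$'s two edge-pieces by a single combined piece $s - r$. This saves $\H(\xi_s)$ at $v$ and preserves security since no unqualified set gains information about $s$ (the common value $r$ is uniform and independent of $s$). I would amortize a single such saving over $L = \Theta(dn)$ independent parallel copies of first-fit, yielding per-vertex share sizes of at most $(dL - 1)\H(\xi_s)$ and overall complexity $(dL - 1)/L = d - 1/(2dn)$. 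The performance ratio statement then follows immediately from the trivial $\sigma(G)\ge 1$.

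The main obstacle in (i) is certifying that a safe merge is available in every graph, regardless of the arrival order. For highly structured graphs (e.g., odd cycles) iterating many merges creates global consistency constraints that can collapse the randomness and force the scheme to leak the secret to a singleton, so one must merge judiciously. A careful argument, mimicking the first-fit correctness proof, shows that at least one merge is always safely executable on-line, which the amortization above converts to the quantitative bound.

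For part (ii), I would take the trees $T_n$ from Theorem~\ref{thm:1.3}, which satisfy $o(T_n) \ge \lfloor\sqrt n\rfloor/2$, and combine this with the off-line tree formula $\sigma(T) = 2 - 1/c$ of \cite{csirmaz-tardos}, where $c$ is the size of the largest core. Provided $T_n$ is chosen so that $c\le 2$, we have $\sigma(T_n)\le 3/2$, and the performance ratio is at least $(\lfloor\sqrt n\rfloor/2)/(3/2) = \lfloor\sqrt n\rfloor/3 \ge \sqrt n/3$ for an infinite subsequence of $n$ (e.g., perfect squares).

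The main obstacle in (ii) is achieving the core-size bound $c\le 2$ simultaneously with the large on-line lower bound. Simple candidates like spiders with $\sqrt n$ long legs have cores of size $\Theta(\sqrt n)$, yielding $\sigma$ close to $2$ and only performance ratio $\sqrt n/4$. To close the gap I would construct $T_n$ as a carefully designed union of small ``bi-star'' gadgets (two adjacent internal vertices, each carrying its own pendant leaves) glued together so that no connected subtree of more than two non-leaf vertices has every member adjacent to a vertex outside, while the entropy-method lower bound of Theorem~\ref{thm:1.3} still forces $\Omega(\sqrt n)$ share pieces at some vertex of the assembly.
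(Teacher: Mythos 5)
Your part (i) has the right high-level architecture --- run $\Theta(dn)$ parallel copies of a complexity-$d$ base scheme and save one field element at each vertex of backward degree $d$ by letting two of its neighbours share a random value --- but two genuine gaps remain, and the obstacle you name (``global consistency constraints'' on odd cycles) is not the real one. First, timing: when $v$ arrives and turns out to have backward degree $d$, its neighbours $w_1,w_2$ have already arrived and their shares are irrevocable, so you cannot ``give $w_1,w_2$ the same uniformly random value $r$'' at that point. The shared randomness must be pre-distributed: the paper gives every vertex of backward degree $<d$ a reservoir of $d$ common random elements for \emph{each} pair of such vertices (about $d(n-1)$ extra elements per vertex), so that whichever two earlier-arrived neighbours of $v$ you pick, an unused common value already exists; one also needs the observation that any fixed pair is used at most $d$ times, so the reservoir never runs dry. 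This pre-distribution is precisely where the $dn$ in the denominator comes from. Second, plain first-fit cannot serve as the base scheme: in it every vertex receives exactly $d$ bits per copy (backward-degree-$m$ vertices get $d-m$ anticipatory bits), so the light vertices have no slack and adding $\Theta(dn)$ reservoir elements pushes them above your target $dL-1$. The paper instead builds an on-line version of Stinson's star decomposition (Lemma~\ref{lemma:decomp}, Scheme~\ref{scheme:below-dhalf}) over a two-dimensional vector space in which a vertex of backward degree $m<d$ receives only $(m'+d)k\le(2d-1)k$ elements against $2dk$ at the heavy vertices; that slack of $k$ elements absorbs the reservoir when $k=dn$, yielding $(2dk-1)/(2k)=d-1/(2dn)$.

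Your part (ii) is essentially correct, and the difficulty you anticipate does not arise. The tree $T_n$ of Theorem~\ref{thm:1.3} (a $d$-star whose centre and whose $m$ erstwhile isolated vertices are all joined to one new vertex) has exactly two non-leaf vertices, so its largest core is that adjacent pair and $c=2$; by \cite{csirmaz-tardos} $\sigma(T_n)=3/2$ and the ratio is $\lfloor\sqrt n\rfloor/3$ with no gadget construction needed. Even more directly, one can skip trees altogether: the graph of Theorem~\ref{star} itself (a $d$-edge star plus $m$ isolated vertices) is ideal, $\sigma(G)=1$, since the star is complete bipartite and isolated vertices need no share, while $o(G)\ge d-d^3/(2m)\ge\frac13\sqrt n$ for $d=\lfloor\sqrt n\rfloor$. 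Either route proves (ii).
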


Finally we show that the first-fit scheme is {\em never} the best
on-line scheme. The gain, however, can be exponentially small in 
cases when minimal qualified subsets are big. Recall from Theorem
\ref{thm:1.1} that the first-fit scheme has complexity $d(\Gamma)$.

\begin{theorem}\label{thm:1.5}
Let $\Gamma$ be an access structure, $d=d(\Gamma)$ be the maximal
degree of $\Gamma$, $n$ be the number of vertices in $\Gamma$, and $r\ge 2$ be
an upper bound on the size of any hyperedge in $\Gamma$ (thus $r=2$ for
graphs). There is an on-line secret sharing scheme realizing
$\Gamma$ with complexity at most
$$ 
d - \frac 1{ndM+nd^2+n}
$$
where $M = \min (r\cdot n^{2r-3}, 3^{n-1} )$.
\end{theorem}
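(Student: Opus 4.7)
The plan is to perturb the first-fit scheme just enough to save a sliver of entropy from the worst-case share. First-fit realizes $\Gamma$ with complexity exactly $d$ by Theorem~\ref{thm:1.1}, so the task is to locate slack in the construction, squeeze it, and verify that the perturbed scheme remains on-line and realizing.

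First I would classify the dealer's scenarios at each arrival. When participant $v$ arrives, the dealer's relevant view is the history of past arrivals together with the pattern of hyperedges incident to $v$ that have already been completed or partially formed. Quotienting this view by the equivalences that do not affect the subsequent share assignments, I would bound the number of equivalence classes by $M=\min(r\cdot n^{2r-3},3^{n-1})$. The $r\cdot n^{2r-3}$ bound comes from parameterizing each view by the collection of up to $r-1$ predecessors in each of $v$'s up-to-$d$ incident hyperedges; the $3^{n-1}$ bound follows by classifying every other participant into three coarse states (predecessor sharing a common hyperedge with $v$, predecessor not sharing one, or not-yet-arrived).

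For each scenario $(v,c)$ I would design a local modification of first-fit that saves $\varepsilon\cdot\H(\xi_s)$ entropy at a target share, where $\varepsilon=1/(ndM+nd^2+n)$. The modification correlates a tiny linear combination of $v$'s share with shares already committed to $v$'s predecessors, exploiting the fact that first-fit uses independent fresh randomness for every edge. Aggregating across scenarios, each participant $v$ appears as the arriving party in at most $dM$ scenario--edge pairs, contributes to at most $d^2$ pairwise edge interactions as a bystander, and carries one baseline term, yielding a total of $n(dM+d^2+1)=ndM+nd^2+n$ degrees of freedom. Choosing $\varepsilon$ as above then gives the claimed bound, and the scheme remains on-line because every perturbation depends only on the dealer's current view at the moment of commitment.

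The main obstacle is verifying realizability of the perturbed scheme, i.e., that qualified sets still reconstruct the secret and unqualified sets still learn nothing, uniformly over all arrival orders. I would encode the modifications as linear constraints on the joint share entropies and check that the Shannon inequalities forced by realizability (monotonicity, submodularity, and the standard secrecy conditions) continue to hold. The most delicate point is showing that perturbations from different scenarios do not accumulate destructively on a single share; I expect to handle this by a disjointness or orthogonality argument in the space of auxiliary randomizations used by first-fit, and it is precisely this counting that produces the denominator $ndM+nd^2+n$.
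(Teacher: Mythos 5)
Your proposal reproduces the \emph{shape} of the bound but not the construction, and the mechanism you describe for saving entropy does not work as stated. The paper's scheme is not a perturbation of first-fit by ``correlating a tiny linear combination of $v$'s share with shares already committed to $v$'s predecessors'': once shares are committed they cannot be touched, and in first-fit the predecessors' bits are mutually independent, so there is no correlation to exploit after the fact. The actual idea is to \emph{pre-distribute} correlated randomness before anyone knows who will be overloaded: it is built on an on-line implementation of Stinson's decomposition (generalized to stars of hyperedges, covering each hyperedge $r$ times), and for every pair $(U,V)$ of disjoint subsets of participants with $|U|,|V|<r$ the dealer hands out $d$ matched sums of random field elements to the members of $U$ and $V$, plus $d$ spare values. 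When a participant $v$ later arrives with the maximal number $d$ of backward hyperedges, two of its minimal qualified sets $A_1,A_2$ can be served by a \emph{single} field element, because the pre-arranged values on $A_1\setminus\{v\}$ and $A_2\setminus\{v\}$ were forced to have equal sums. This saves exactly one field element on the heaviest shares at the cost of $d(M+d)$ extra elements on every light share.

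This also shows where your reading of $M$ goes wrong: $M$ is not the number of equivalence classes of dealer views, but the number of pairs $(U,V)$ of disjoint subsets with $|U|,|V|<r$ that contain a fixed participant (whence $3^{n-1}$: each other participant is in $U$, in $V$, or in neither; and $r\cdot n^{2r-3}$: at most $2r-3$ further elements to choose). Finally, two steps are missing that are needed to close the accounting. First, parallel repetition: the paper runs $k=dM+d^2+1$ independent copies so that the constant additive overhead $d(M+d)$ on light vertices is dominated by the per-copy load, while the single saved element on heavy vertices yields the $1/(kn)$ gain; your $\varepsilon$ appears from nowhere, since one cannot shave a fractional amount of entropy off a share without this amortization. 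Second, realizability cannot be ``checked via Shannon inequalities'' --- those are necessary conditions used for lower bounds and certify nothing about a concrete scheme; one must verify directly that every qualified set recovers enough inner products to determine the secret and that the field elements seen by an unqualified set are jointly independent of it, which the pre-distributed matched sums are specifically designed to preserve.
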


\subsection{Organization}
The rest of the paper is organized as follows. In section \ref{prelim}
we give precise definition for the on-line secret
sharing. In section~\ref{first-fit} we describe variants of our general
first-fit scheme and prove Theorem~\ref{thm:1.1}.
Section~\ref{path-and-circle} deals with the on-line complexity of paths and
cycles and we prove there Theorem~\ref{thm:1.2}(i). In
section~\ref{entropy} we exhibit graphs with the on-line complexity close to the maximal degree. These include
the long paths and cycles proving Theorem~\ref{thm:1.2}(ii) and trees proving
Theorem~\ref{thm:1.3}. Finally, in Section~\ref{tight} we show that the
first-fit scheme is never optimal proving Theorems~\ref{thm:1.4} and
\ref{thm:1.5}.

\section{On-line secret sharing schemes}\label{prelim}

Having defined off-line secret sharing schemes in the preceding section we
define on-line secret sharing here. On-line secret sharing relates to the secret
sharing in the same way as on-line graph coloring relates to graph
coloring. Here the structure $\Gamma$ is known in advance, and the
participants receive their
shares one by one and the assigned share cannot be changed later on.
Participants appear according to an unknown permutation. When a participant
$p$ shows up, his identity (as a vertex of $\Gamma$) is not
revealed, only those qualified subsets are shown to the dealer which $p$ is
the last member of (i.e., all other members arrived previously). 
%%According to
Based only on
the emerging hypergraph (on the participants who have arrived so far) the dealer
assigns a share to the new participant. At the end the dealer will see a
permuted version of the access structure $\Gamma$ and the shares distributed
must satisfy the usual properties: the collection of shares assigned to a
qualified subset 
must determine the secret, and the collection of shares of an unqualified 
subset must be independent of the secret.

%% Given an example here
\subsection{An example}
Suppose we have three participants: $a$, $b$, $c$, and the minimal qualified
subsets are $\{a,b\}$ and $\{b,c\}$. Thus this access
structure is based on the path $P_3$ on three vertices. In Scheme
\ref{scheme:example} below we 
describe an on-line scheme realizing $P_3$.
\scheme{A sample on-line scheme}{scheme:example}
The dealer chooses two independent random bits: $r$ and $t$, and
sets the secret to be $r\oplus t$ the modulo 2 sum of these values.

When the first participant ($A$) shows up, he could be 
any of $a$, $b$, and $c$. In any case he gets the share $r$.
When the next participant shows up ($B$), the dealer also learns whether $A$
and $B$ together form a qualified set. 
\begin{hang}[If $\{A,B\}$ is independent:] the last 
participant (who did not appear yet) is $b$, and then $B$ will get the 
same share as $A$ did (that is, $r$). The last participant will receive $t$.
%thus both qualified subsets can recover the secret, while no unqualified
%subset has any information on the secret.
\hangitem{If $\{ A,B\}$ is a qualified subset:} any of $A$ and $B$
can be the middle person $b$. Nevertheless, $B$ gets the share $t$, thus
$\{A,B\}$ can recover the secret.
When the last person arrives, he is connected to either $A$ or $B$, but not
both. If he is connected to $A$, then he receives the same share as $B$ did
(that is
$t$), and if he is connected to $B$, then he receives the same share as $A$
did (that is $s$).
\end{hang}
\endscheme
As can be checked readily, qualified subsets can always recover 
the secret, and unqualified subsets have no information on the secret.
Every participant in the scheme receives a single bit no matter which order
they arrived. The secret is a single bit, thus the complexity of 
this scheme is 1.

\subsection{Formal definition}

To formalize this concept, we assume all the shares that may be assigned to
participants form a large (predetermined) finite collection of random
variables $\{\xi_\alpha\,:\,\alpha\in\Omega\}$. As usual, these and the secret
$\xi_s$ are random variables with a finite range and with a joint
distribution. We assume $\H(\xi_s)>0$. The dealer assigns one
of the variables $\xi_\alpha$ to each participant as soon as he shows up. The
choice of the index $\alpha$ for a participant depends only on the emerging
hypergraph, i.e., the set of hyperedges consisting of this and earlier
participants. In particular, assuming there is no singleton hyperedge, the
first participant always gets the same variable. Notice that the distribution
process does not depend on the values of the random variables, in fact one can
visualize the process as assigning variables to participants, and
only after all assignments evaluating the variables according to their 
joint distribution.

An on-line secret sharing scheme realizes the access structure $\Gamma$ if at
the end of the process, provided that the emerging hypergraph is indeed a
vertex-permuted copy of $\Gamma$, the shares of every qualified subset
determine the secret and the shares of every unqualified subset are independent
of the secret. Notice however that many sets of the random variables
$\xi_\alpha$ get never assigned to participants simultaneously, and those
collections do not have to satisfy any requirement.

The {\em complexity of the scheme $\mathcal S$} is the size
of the largest share divided by the size of the
secret:
$$
    \sigma(\mathcal S) = \frac{\max \{ \H(\xi_\alpha): \alpha \in\Omega\}}{\H(\xi_s)}
.
$$

The {\em on-line complexity} $o(\Gamma)$ of an access structure $\Gamma$ is the
infimum of the complexities of all on-line schemes realizing $\Gamma$:
$$
    o(\Gamma) = \inf \{ \sigma(\mathcal S) \,:\,
    \mathcal S \mbox{ is on-line and realizes } \Gamma \} .
$$
By fixing the order of the participants, any on-line scheme can be downgraded
to an off-line scheme. Consequently the on-line complexity cannot be smaller
than the off-line one: $o(\Gamma)\ge\sigma(\Gamma)$ holds for any $\Gamma$.

\section{First-fit on-line scheme}\label{first-fit}

In this section we present a general on-line secret sharing scheme. We name
it {\em first-fit scheme} because of the analogy to the first-fit on-line
graph coloring algorithm \cite{amiller}. The analogy even carries further.
As first-fit on-line coloring is oblivious of the graph structure on the
unseen vertices, similarly our
first-fit scheme works without the knowledge of the ``global'' access 
structure. However, for our scheme to work the dealer must know the maximum
degree $d$. For graphs we present a version of the scheme later where the
maximum degree does not have to be known in advance. This modified scheme has
complexity $d+1$ instead of $d$ given by the first-fit scheme.

For on-line schemes we distinguish the hyperedges containing a participant $p$
as {\em backward edges} and {\em forward edges} at $v$, with backward
edges being those that are revealed when $p$ arrives, and the forward edges
being those that will be revealed later.

Let us assume that $d$ is the maximal degree of an unknown access structure. The
first-fit on-line secret sharing scheme works described in the box below.
\scheme{General first-fit scheme}{scheme:first-fit}
The secret is a uniform random bit $s$. 

When a participant $p$ arrives do the following for each 
backward edge $E$ containing $p$:
\begin{hang}
For each participant $q\in E$ different from $p$ 
%%and for each participant $q\ne p$ 
we select a previously unassigned (random) bit given to $q$ previously,
and assign it to the hyperedge $E$. We
also give a bit to $p$ which is also assigned to the hyperedge $E$. We choose 
this last bit in such a way
that the mod 2 sum of all bits assigned to $E$ be the secret $s$. 
\end{hang}
Finally, if the number
of backward edges at $p$ is $m<d$, then we give $d-m$ fresh uniform random 
unassigned bits to $p$ (in anticipation of the forward edges).
\endscheme

As an example, we give a more detailed description in Scheme
\ref{scheme:ff-for-P3} for the access structure $P_3$. Here the maximal degree $d$ 
is two.

\scheme{Details of first-fit scheme for $P_3$}{scheme:ff-for-P3}
The secret is a uniform random bit $s$.

When the first participant $A$ arrives, there is no backward edge, thus he 
gets two independent random bits $A_1$ and $A_2$.

When the second participant $B$ arrives, then we distinguish two cases:
\begin{hang}[$\bullet$ $\{A,B\}$ is a backward edge:]  the
(first) unused random bit from participant $A$ assigned to this edge will be
$A_1$. Thus the first bit 
$B$ receives is $A_1\oplus s$ from which $\{A,B\}$ can recover the secret. 
Also $B$ receives a new fresh random bit denoted as $B_1$. Either $A$ or $B$
can be the middle vertex of $P_3$. When the third participant $C$ arrives
then
   \begin{hang}[$\bullet$ if $\{C,A\}$ is an edge:] $C$ gets $A_2\oplus s$ plus an extra
random bit.
   \hangitem{$\bullet$ if $\{C,B\}$ is an edge:} $C$ gets $B_1\oplus s$ plus an extra
random bit.
   \end{hang}
\end{hang}

\begin{hang}[$\bullet$ $\{A,B\}$ is unqualified:] in this case there is no backward
edge, thus $B$ gets two fresh random bits: $B_1$ and $B_2$. $A$ and $B$ are
the two endpoints of the path $P_3$. 

When $C$ arrives, there will be two backward edges: $\{C,A\}$, and $\{C,B\}$.
The first backward edge is assigned the random bit $A_1$, and $C$ receives
$A_1\oplus s$. The second backward edge $\{C,B\}$ gets $B_1$, and $C$ also receives
$B_1\oplus s$, a total of two bits.
\end{hang}
\endscheme

\begin{proof} (Theorem~\ref{thm:1.1}) 
To check that the general first-fit scheme defined as Scheme \ref{scheme:first-fit}
is indeed a correct secret sharing 
scheme realizing the access structure $\Gamma$, first we note that 
if the maximal degree is $d$ then no participant runs out of 
unassigned bits.

Second, the complexity of the scheme is $d$ as
each participants receives exactly $d$ bits and the secret is a single 
uniform
bit. The participants in a hyperedge $E$ can determine the secret by adding
mod 2 the bits which were assigned to $E$. All bits received by an unqualified set
together with the secret form a set of independent random bits. So the 
first-fit scheme
realizes any access structure of maximal degree (at most) $d$.
\end{proof}

We remark that the bound given by this theorem matches the
complexity of the general off-line secret sharing scheme of Ito {\em et al.}
\cite{ito}. 

%By \cite{stinson} an universal bound on the off-line
%complexity of a graph $G$ is $(d+1)/2$ where $d$ is the maximal degree of $G$.
%We will see shortly that the on-line complexity can be arbitrarily close to the
%maximum degree $d$ even for graphs. 

For graphs there is a 
modified version of the first-fit scheme \ref{scheme:first-fit} detailed in
Scheme \ref{scheme:ff-for-graphs}.
The secret is still a uniform random bit $s$,
but each participant receives a share whose size is only one more than the 
number of backward
edges containing that vertex, i.e., edges which are revealed when the vertex arrives. 
\scheme{Special first-fit scheme for graphs}{scheme:ff-for-graphs}
The secret is a uniform random bit $s$, the access structure is a graph $G$.

When the participant $p$ arrives, the dealer gives him a new random bit $r_p$
(independently from every other bits), and do the following for each 
backward edge $(q,p)\in G$:
\begin{hang}
take the random bit $r_q$ assigned to $q$, and give $r_q\oplus s$ to $p$ as
well.
\end{hang}
\endscheme
Thus
the maximum possible share size is $d+1$, slightly worse than the $d$ above.
The advantage of this modified scheme is that the dealer needs not
to know the maximum degree $d$ in advance.
%In this modified scheme whenever
%a participant $p$ arrives he receives a (fresh) uniform random bit $b_p$ and
%furthermore for each earlier participant $q$ such that $pq$ is
%an edge, $p$ also receives the bit $s+b_q$ (addition understood modulo 2).
It is easy to check that this scheme realizes any graph. It is interesting to
note however, that we could not find any analogous scheme for general
hypergraphs.

Yet another version of the first-fit scheme for graphs is when in 
Scheme \ref{scheme:ff-for-graphs}
we simply do not give the new random bit $r_p$ to $p$ whenever $p$
has the maximum number $d$ of backward edges. For this scheme to work
we need to know $d$ in advance. The
advantage compared to the general first-fit scheme \ref{scheme:first-fit} is 
that most participants
receive fewer than $d$ bits, only participants with $d$ or
$d-1$ backward edges receive a $d$ bit share.

\section{Paths and cycles}\label{path-and-circle}

There are cases when the on-line and off-line complexity coincide. The
simplest ones are covered by the Claim \ref{claim:threshold}. To state it we
need some definition.
Let $\Gamma$ be a hypergraph
and $S$ be a subset of the vertices of $\Gamma$. The {\em sub-hypergraph of
$\Gamma$ induced {\em(or spanned)} by $S$} is the hypergraph with vertex set $S$ and with
those hyperedges of $\Gamma$ that are contained in $S$. For simplicity we call
induced sub-hypergraphs {\em substructures}. We call a hypergraph
$\Gamma$ {\em fully symmetric} if each isomorphism between two of its
substructures can be extended to an automorphism of $\Gamma$.

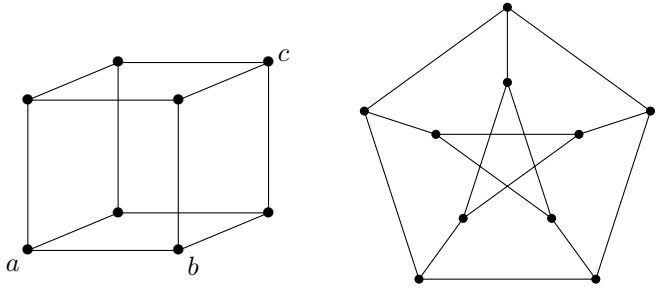
\begin{figure}
\begin{center}
%\def\psb#1,#2;{\put(#1,#2){\makebox(0,0){$\bullet$}}}
%\setlength\unitlength{0.006\linewidth}
%\begin{picture}(40,25)(0,0)
%\psb0,0;\psb0,20;\psb20,0;\psb20,20;
%\psb15,10;\psb35,10;\psb15,30;\psb35,30;
%\put(0,0){\line(1,0){20}}\put(15,10){\line(1,0){20}}
%\put(0,20){\line(1,0){20}}\put(15,30){\line(1,0){20}}
%\put(0,0){\line(0,1){20}}\put(20,0){\line(0,1){20}}
%\put(15,10){\line(0,1){20}}\put(35,10){\line(0,1){20}}
%\put(0,0){\line(3,2){15}}\put(0,20){\line(3,2){15}}
%\put(20,0){\line(3,2){15}}\put(20,20){\line(3,2){15}}
%\put(-1,-2){\makebox(0,0){$a$}}\put(21,-2){\makebox(0,0){$b$}}
%\put(12,30){\makebox(0,0){$c$}}
%\end{picture}
\begin{tikzpicture}[scale=2.0]
\draw (0,0)node{$\bullet$}-- (0,1)node{$\bullet$} 
    -- (1,1)node{$\bullet$} -- (1,0)node{$\bullet$} -- cycle;
\draw (0.6,0.25)node{$\bullet$} -- (0.6,1.25)node{$\bullet$}
     -- (1.6,1.25)node{$\bullet$} -- (1.6,0.25)node{$\bullet$} -- cycle;
\draw (0,0)--(0.6,0.25); \draw (0,1)--(0.6,1.25); \draw (1,1)--(1.6,1.25);
\draw (1,0)--(1.6,0.25);
\draw (-0.1,-0.1) node {$a$};
\draw (1.1,-0.1) node {$b$};
\draw (1.7,1.3) node {$c$};
\end{tikzpicture}
\quad\quad
\begin{tikzpicture}[scale=1.0]
\draw (18:2cm) -- (90:2cm) -- (162:2cm) -- (234:2cm) --
(306:2cm) -- cycle;
\draw (18:1cm) -- (162:1cm) -- (306:1cm) -- (90:1cm) --
(234:1cm) -- cycle;
\foreach \x in {18,90,162,234,306}{
\draw (\x:1cm) -- (\x:2cm);
\draw[fill] (\x:2cm) circle (1.5pt);
\draw[fill] (\x:1cm) circle (1.5pt);
}
\end{tikzpicture}
\end{center}
\caption{Edge graph of the 3d cube and the Petersen graph}\label{fig:3cube-petersen}
\end{figure}
An easy example for a fully symmetric hypergraph is the $(n,k)$-threshold
structure. It consists of all $k$-element subsets of an $n$-element vertex
set. As all permutations of the vertex set is an automorphism of the
structure, any permutation of a subset can be extended to an automorphism of
the whole structure.

The edge graph of the 3d cube depicted on Figure \ref{fig:3cube-petersen} is
{\em not} fully transitive. To see this, the subgraph spanned on vertices
$a$, $b$, and $c$ has the automorphism which swaps $a$ and $b$ (and leaves
$c$ untouched). This automorphism cannot be extended to an automorphism of
the whole graph.

A less trivial example for a fully symmetric hypergraph is the so-called 
{\it Petersen graph} depicted on Figure \ref{fig:3cube-petersen}. This graph
is a 3-regular graph on 10 vertices with lots of symmetries.

\begin{claim}\label{claim:threshold}
For a fully symmetric access structure the on-line and off-line complexities
are equal.
\end{claim}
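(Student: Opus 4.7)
The inequality $o(\Gamma)\ge\sigma(\Gamma)$ was noted at the end of Section~\ref{prelim}, so the plan is to establish the reverse direction by converting an arbitrary off-line scheme $\mathcal S$ realizing $\Gamma$, with shares $\{\xi_v:v\in V(\Gamma)\}$, into an on-line scheme of the same complexity $\sigma(\mathcal S)$. Taking infimum over $\mathcal S$ then yields $o(\Gamma)\le\sigma(\Gamma)$.

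The on-line dealer maintains an injection $\phi_t$ from the already arrived participants into $V(\Gamma)$ such that $\phi_t$ is an isomorphism between the emerging hypergraph $H_t$ and the substructure of $\Gamma$ induced by the image of $\phi_t$; the $t$-th arrival simply receives the share $\xi_{\phi_t(t)}$ drawn from the off-line scheme. Once every participant has arrived, $\phi_n$ becomes a bijection that identifies the final hypergraph with $\Gamma$, so the joint distribution of the delivered shares is a relabelling of the joint distribution prescribed by $\mathcal S$. The usual correctness properties follow at once (qualified subsets determine the secret, unqualified ones remain independent of it), and the complexity equals $\max_v\H(\xi_v)/\H(\xi_s)=\sigma(\mathcal S)$.

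The crux of the construction is the extension step: when participant $t+1$ arrives and her backward edges are revealed, the dealer must pick $\phi_{t+1}(t+1)\in V(\Gamma)\setminus\phi_t(\{1,\ldots,t\})$ without modifying $\phi_t$ on the earlier participants, whose shares are already committed. The observation I would exploit is that $H_{t+1}$ is always isomorphic to some substructure of $\Gamma$, since the eventual full hypergraph is a permuted copy of $\Gamma$. Picking any isomorphism $\psi$ from $H_{t+1}$ onto such a substructure, the composition $\psi|_{\{1,\ldots,t\}}\circ\phi_t^{-1}$ is an isomorphism between two substructures of $\Gamma$, so full symmetry supplies an automorphism $\alpha$ of $\Gamma$ extending it, and $\phi_{t+1}=\alpha^{-1}\circ\psi$ produces the required extension.

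This extension step is the main obstacle, and it is precisely where the full-symmetry hypothesis enters; without it, the newly revealed backward edges could force the dealer to revise her commitment about where an already arrived participant sits inside $\Gamma$. The remaining verifications are routine: that $\phi_{t+1}$ is determined by $H_{t+1}$ together with the maintained state $\phi_t$ (so the construction is genuinely on-line), and that the resulting joint distribution of shares is a permuted copy of the one coming from $\mathcal S$ and hence satisfies the defining properties of a scheme realizing $\Gamma$.
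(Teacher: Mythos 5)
Your proposal is correct and follows essentially the same route as the paper: reuse the off-line shares and maintain, step by step, an isomorphism from the emerging hypergraph onto a substructure of $\Gamma$, invoking full symmetry to extend it without disturbing already committed participants. Your explicit construction of the extension (compose a fresh isomorphism $\psi$ with the automorphism supplied by full symmetry) just spells out the step the paper states in one line.
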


\begin{proof}
%% Given an 
Suppose we have an off-line secret sharing scheme realizing a fully
symmetric access structure $\Gamma$
consisting of the shares $\xi_p$ for vertices $p$ of $\Gamma$ and $\xi_s$ for the
secret. We can use the very same variables for an on-line secret sharing
scheme as follows. We maintain an isomorphism $\alpha$ between the emerging hypergraph and
a substructure of $\Gamma$ and give the next participant $q$ the share
$\xi_{\alpha(q)}$. We keep $\xi_s$ in its role as the secret. Before the first
participant arrives $\alpha$ is empty. As $\Gamma$ is fully symmetric, whenever
a new participant arrives and the emerging hypergraph grows, we can
extend $\alpha$ to this new
vertex so that the value of $\alpha$ does not change on the older vertices and
$\alpha$ remains to be an isomorphism between the emerging hypergraph and a
substructure of $\Gamma$. At the end of the on-line process $\alpha$ becomes
an isomorphism between the full access structure and $\Gamma$. As the
off-line scheme realizes $\Gamma$, the
constraints on qualified and unqualified subsets will hold in this on-line
scheme as well.
\end{proof}

As a toy example, let us consider the above procedure when $\Gamma$ is the
fully symmetrical structure $C_4$. Let the four vertices be $a$, $b$, $c$
and $d$, and suppose that a perfect off-line secret sharing scheme assign 
the shares $\xi_a, \dots,\xi_d$ to these vertices.

When the first participant arrives, we pretend him to be $a$, and give him
the share $\xi_a$. When the second participant arrives, we learn whether he
is connected to the first one or not. If they are connected, then we pretend 
him to be $b$ and give him the share $\xi_b$, otherwise we think of him as
$C$ and assign him the share $\xi_c$. As the participants are
indeed the vertices of some $C_4$, after all of them arrives, their 
``pretended'' and their ``real'' roles form an
isomorphism between these structures which established the correctness of
the on-line scheme.

\medskip

Note that the strong symmetry requirement of Claim~\ref{claim:threshold} seems
to be necessary. The weaker assumption that the automorphism group of $\Gamma$
is {\em transitive} on the vertices and/or on the hyperedges is not enough. As
a counterexample, consider $C_n$, the cycle on $n\ge7$ vertices. Its
automorphism group is transitive on both the edges and vertices, but it is
not transitive on certain isomorphism classes of induced substructures. For
example no automorphism brings a pair of second neighbors to a pair of third
neighbors, despite the fact that they induce isomorphic (empty) subgraphs. The
off-line complexity of $C_n$ is $3/2$, but the on-line complexity is strictly
larger than this value (and approaches $2$ as $n$ goes to infinity) 
by Theorem~\ref{thm:1.2}.

Let $\Gamma'$ be a hypergraph obtained from $\Gamma$ by replacing
each vertex of $\Gamma$ by a nonempty class of equivalent vertices, and
replacing each hyperedge with the complete multipartite hypergraph on the
corresponding classes. We call $\Gamma'$ a {\em blowup} of $\Gamma$. Note that
$\sigma(\Gamma')=\sigma(\Gamma)$ since one can assign the same random variable
to all equivalent vertices in a class. We shall see later that the on-line
complexity of the blowup can be larger than that of
$\Gamma$. Indeed, Lemma~\ref{star} implies that the blowups of the
simple graph $G_0$ with three vertices and a single edge have unbounded
on-line complexity.

Claim~\ref{claim:threshold} applies to the {\em threshold structures}, these
are the complete uniform hypergraphs. Among graphs it applies to the complete
graphs and it also applies to the complete multi-partite graphs with equal
number of vertices in each class. All these access structures have complexity
1, so their on-line complexity is also 1. The same is true for arbitrary
complete multi-partite graphs (the blowups of complete graphs) as they are
induced subgraphs of some fully symmetric complete multipartite graph.

With these preliminaries, we turn to the complexity of paths and cycles.
First we show that the on-line complexity of short paths and cycles are the
same as their off-line complexity.

\begin{proof} (Theorem~\ref{thm:1.2}(i))
$P_2$, and $C_3$ are complete graphs, $P_3$ and $C_4$ are
complete bipartite graphs, so their on-line and off-line complexity are
the same and equal to 1.
$C_5$ is neither complete, nor complete bipartite graph, but it is fully
symmetric. So its on-line and off-line complexities agree by
Claim~\ref{claim:threshold}. $P_4$ is not fully symmetric,
still its on-line and off-line complexities are both $3/2$. To see this notice
that $P_4$ is an induced subgraph of $C_5$, so we have $o(P_4)\le
o(C_5)=\sigma(C_5)$ and it is well known that $\sigma(P_4)=\sigma(C_5)=3/2$,
see e.g., \cite{capocelli}.
A similar argument shows that $o(P_5)=\sigma(P_5)=o(C_6)=\sigma(C_6)=3/2$
once we show the bound $o(C_6)\le3/2$. We show this by
presenting an on-line secret sharing scheme of complexity $3/2$ realizing $C_6$.

\scheme{Optimal on-line scheme for $C_6$}{scheme:C6}
This scheme uses random bits $a$, $b$, $c$, $d$, $e$, $f$ and
$x$, $y$, $z$ whose joint distribution is uniform on the values satisfying
$a+b+c+d+e+f=x+y+z=0$. Here and in the list below summation is
understood modulo $2$. The random variables representing the shares and the
secret $\xi_s$ are as follows
{\setlength\arraycolsep{0.15em}\begin{eqnarray*}
\xi_1&=&(a,b+x,c),\\
\xi_2&=&(b,c+y,d),\\
\xi_3&=&(c,d+z,e),\\
\xi_4&=&(d,e+x,f),\\
\xi_5&=&(e,f+y,a),\\
\xi_6&=&(f,a+z,b),\\
\xi_7&=&(c,b+c+d+x,e+x),\\
\xi_8&=&(f+y,a+b+c+y,b),\\
\xi_s&=&(x,y,z).
\end{eqnarray*}}%%
Let $\Sigma$ be the cycle on the six vertices $\xi_1$, $\xi_2$, $\xi_3$, $\xi_4$,
$\xi_5$ and $\xi_6$ in this cyclic order and $\Pi$ be the cycle on the vertices
$\xi_1$, $\xi_2$, $\xi_7$, $\xi_5$, $\xi_4$ and $\xi_8$ in this cyclic
order. We assign the variables to participants such that at the end the
assignment represents an isomorphism between the emerged access structure and
either $\Sigma$ or $\Pi$. Notice that if we succeed, then the
conditions on qualified and unqualified subsets are satisfied as both
cycles $\Sigma$ and $\Pi$ represent off-line secret sharing schemes realizing $C_6$.

We start with assigning shares to participants from the intersection of
$\Sigma$ and $\Pi$ (that is, we assign one of $\xi_1$, $\xi_2$, $\xi_4$ or
$\xi_5$ until we can). We choose the variables in such a way that at any time
the assignment represents an isomorphism between the emerging graph and an
induced subgraph of the intersection. We fail when either two adjacent edges
appear in the emerging graph or three vertices form an independent set.
At that point we commit to either $\Sigma$ or $\Pi$ and assign variables so
that at the end we get an isomorphism to the selected cycle.
\endscheme

Note that in Scheme \ref{scheme:C6} the size of the secret is $\H(\xi_s)=2$, 
while the size of any share is $3$, so the complexity of the scheme is $3/2$ 
as claimed.

All neighboring pairs in $\Sigma$ and $\Pi$ can determine the secret. For
example, from $\xi_7$ and $\xi_5$ one can get $x=e+(e+x)$, then extract the 
value $b+c+d=(b+c+d+x)+x$, finally $y=(f+y)+(b+c+d)+a+e$.
We leave it to
the reader to verify that Scheme \ref{scheme:C6} works indeed for every 
permutation of the vertices.
\end{proof}

\section{The entropy method}\label{entropy}

In this section we prove lower bounds on the on-line complexity of access
structures. We start with recalling the so-called entropy method
discussed, among others, in \cite{capocelli, entropy} as that seems to be the
most powerful method for proving lower bounds for the off-line
complexity. Then we extend it to the on-line model.

Let us consider a secret
sharing scheme with the set of participants being $P$.
For any subset $A$ of $P$ we define $f(A)$ as the joint entropy
of the random variables (the shares)
belonging the members of $A$, divided by the entropy of the secret:
\begin{equation}\label{eq:entropy-method}
     f(A) = \frac{\H(\{\xi_i:i\in A\})}{\H(\xi_s)}.
\end{equation}
The so-called Shannon inequalities for the entropy, see \cite{book:csiszar}, can
be translated to linear inequalities for $f$ as follows.
\begin{enumerate}
\item[a)] $f(\emptyset)=0$,
\item[b)] monotonicity: if $A\subseteq B$ then $f(B)\ge f(A)$,
\item[c)] submodularity: $f(A)+f(B)\ge f(A\cap B)+f(A\cup B)$.
\end{enumerate}
Furthermore, if the scheme realizes an access structure $\Gamma$, then
the conditions that qualified subsets determine the secret, while unqualified
subsets are independent of it imply further inequalities:
\begin{enumerate}
\item[d)] strict monotonicity: if $A\subset B$, $A$ is unqualified but $B$
is qualified, then $f(B)\ge f(A)+1$,
\item[e)] strict submodularity: if $A$ and $B$ are both qualified but $A\cap
B$ is unqualified, then $f(A)+f(B)\ge f(A\cap B)+f(A\cup B)+1$.
\end{enumerate}
We call a real function $f$ satisfying the conditions a)--e) above an {\em
entropy function} for $\Gamma$. An entropy function $f$ is {\em
$\alpha$-bounded} if $f(A)\le\alpha$ for all singleton sets $A$. The entropy
method can be summarized as the following claim:

\begin{claim}
For any access structure $\Gamma$ there exists a $\sigma(\Gamma)$-bounded
entropy function for $\Gamma$.
\end{claim}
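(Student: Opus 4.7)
My plan is to obtain the desired entropy function as a limit of functions coming from actual schemes. Given any scheme $\mathcal S$ realizing $\Gamma$, I would first show that the function $f_{\mathcal S}(A) = \H(\{\xi_i : i\in A\})/\H(\xi_s)$ is a $\sigma(\mathcal S)$-bounded entropy function for $\Gamma$, and then use a compactness argument on a sequence $\mathcal S_n$ with $\sigma(\mathcal S_n)\to\sigma(\Gamma)$.

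For the first step, conditions a)--c) are the classical Shannon inequalities (nonnegativity, monotonicity, submodularity) applied to the joint distribution of the shares, normalized by $\H(\xi_s)$; these require no assumption on $\Gamma$. The boundedness $f_{\mathcal S}(\{i\})\le \sigma(\mathcal S)$ is immediate from the definition of $\sigma(\mathcal S)$. The work is in verifying the two ``strict'' conditions. For d), if $A\subset B$ with $A$ unqualified and $B$ qualified, then $\H(\xi_s\mid \xi_A)=\H(\xi_s)$ and $\H(\xi_s\mid \xi_B)=0$, whence
\[
\H(\xi_B)=\H(\xi_B,\xi_s)\ge \H(\xi_A,\xi_s)=\H(\xi_A)+\H(\xi_s),
\]
and dividing by $\H(\xi_s)$ gives the claim. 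For e), writing $X=\xi_{A\setminus B}$, $Y=\xi_{B\setminus A}$, $Z=\xi_{A\cap B}$, ordinary submodularity is $I(X;Y\mid Z)\ge 0$; since $A$ and $B$ are each qualified, $\xi_s$ is a function of $(X,Z)$ and also of $(Y,Z)$, and since $A\cap B$ is unqualified, $\xi_s$ is independent of $Z$. Thus by data processing
\[
I(X;Y\mid Z)\ge I(\xi_s;\xi_s\mid Z)=\H(\xi_s\mid Z)=\H(\xi_s),
\]
which after normalization is exactly condition e).

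For the second step, pick schemes $\mathcal S_n$ realizing $\Gamma$ with $\sigma(\mathcal S_n)\to \sigma(\Gamma)$. Each $f_{\mathcal S_n}$ is a function on the finite set $2^P$, and by monotonicity $f_{\mathcal S_n}(A)\le |A|\cdot \sigma(\mathcal S_n)$, so the sequence is uniformly bounded in each coordinate. Extracting a subsequence, I can assume $f_{\mathcal S_n}$ converges pointwise to some $f:2^P\to\mathbb R$. All of a)--e) are closed conditions (equalities or non-strict inequalities of the form ``$\ge$'' with constant right-hand sides), so they survive the limit; and $f(\{i\})=\lim f_{\mathcal S_n}(\{i\})\le \lim \sigma(\mathcal S_n)=\sigma(\Gamma)$, so $f$ is $\sigma(\Gamma)$-bounded as required.

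The only mildly subtle point is the need for the limiting argument, since the infimum defining $\sigma(\Gamma)$ is not a priori attained; the rest is routine manipulation of Shannon entropies, with the data-processing inequality being the one slightly non-obvious ingredient in verifying strict submodularity.
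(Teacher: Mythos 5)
Your proposal is correct and follows essentially the same route as the paper: define $f$ from a scheme via equation~(\ref{eq:entropy-method}), check conditions a)--e), and handle the non-attained infimum by a compactness (subsequence) argument, which the paper delegates to a reference. You merely fill in the verification of d) and e) that the paper treats as already discussed, and your data-processing argument for strict submodularity is a valid way to do so (one small slip: the bound $f(A)\le|A|\sigma(\mathcal S_n)$ follows from subadditivity, i.e.\ submodularity together with $f(\emptyset)=0$, rather than from monotonicity).
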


\begin{proof}
Let us consider a secret sharing scheme realizing $\Gamma$.
Equation~(\ref{eq:entropy-method}) defines the function $f$ and as discussed
above it is an entropy function for $\Gamma$. By the definition of
complexity it is $\alpha$-bounded for the complexity $\alpha$ of the
scheme. In case the complexity $\sigma(\Gamma)$ is not achieved as the
complexity of a scheme realizing $\Gamma$ we use a compactness argument to
finish the proof, see \cite{matus07}.
\end{proof}

The power of the entropy method lies in the fact that finding the smallest
$\alpha$ such that an $\alpha$-bounded entropy function exists for a given
$\Gamma$ is a linear programming problem and it is tractable for small access
structures. This minimal $\alpha$, denoted by $\kappa(\Gamma)$ in
\cite{marti-padro}, is a lower bound on the complexity $\sigma(\Gamma)$.

Our next theorem gives the on-line version of the entropy method. It naturally
extends to on-line complexities of {\em classes of access structures}, a
natural concept to consider, but we restrict our attention to single access
structures in this paper. Let us denote the family of
substructures of an access structure $\Gamma$ by $S(\Gamma)$.

\begin{theorem}\label{on-ent}
{\rm(i)} For every access structure $\Gamma$ there exists a system
$\{F_\Delta\,:\,\Delta\in S(\Gamma)\}$ such that $F_\Delta$ is a non-empty
collection of $o(\Gamma)$-bounded entropy functions for $\Delta$ and they
satisfy the following extension property: if $\mu$ is an isomorphism from
$\Delta_1\in S(\Gamma)$ to a substructure of $\Delta_2\in S(\Gamma)$
and $f_1\in F_{\Delta_1}$, then there exists a function $f_2\in F_{\Delta_2}$
with $f_2(\mu(A))=f_1(A)$ for any subset $A$ of the vertices in $\Delta_1$.

\noindent
{\rm(ii)} For an arbitrary substructure $\Delta$ of $\Gamma$ one has an
$o(\Gamma)$-bounded entropy function $f$ for $\Gamma$ that is symmetric on
$\Delta$, that is, for any automorphism $\mu$ of $\Delta$ one has
$f(\mu(A))=f(A)$ for all sets $A$ of the vertices of $\Delta$.
\end{theorem}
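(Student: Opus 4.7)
The plan is to mimic the proof of the off-line entropy bound but with on-line schemes as the source of entropy functions. Fix an on-line scheme $\mathcal S$ realizing $\Gamma$. For any substructure $\Delta\in S(\Gamma)$ and any ordering $\pi$ of $V(\Delta)$, I run $\mathcal S$ with participants arriving first in the order $\pi$ and then with the remaining vertices of $V(\Gamma)\setminus V(\Delta)$ in any order. The key observation is that the dealer reads only the abstract emerging hypergraph, so the random variables allocated to the vertices of $\Delta$ are determined by $\pi$ alone. Since $\Delta$ is an induced substructure of $\Gamma$, qualified and unqualified subsets inside $V(\Delta)$ coincide in $\Delta$ and in $\Gamma$, hence the restriction of the (realizing) full assignment to $V(\Delta)$ realizes $\Delta$. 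Therefore the function $f_{\mathcal S,\pi}^\Delta(A)=\H(\{\xi_i:i\in A\})/\H(\xi_s)$ is a $\sigma(\mathcal S)$-bounded entropy function for $\Delta$ in the sense of (a)--(e).

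The extension property is virtually forced by this construction. Given an isomorphism $\mu$ from $\Delta_1$ onto a substructure of $\Delta_2$, and given $f_1=f_{\mathcal S,\pi_1}^{\Delta_1}$, let $\pi_2$ be any ordering of $V(\Delta_2)$ that begins by listing $\mu(V(\Delta_1))$ in the order $\mu\circ\pi_1$. For the first $|V(\Delta_1)|$ steps, the emerging hypergraphs under $\pi_1$ and under $\pi_2$ coincide as abstract hypergraphs (through the relabeling $\mu$), so the scheme assigns the same variable $\xi_\alpha$ to $\pi_1(k)$ as to $\mu(\pi_1(k))$. Consequently $f_2=f_{\mathcal S,\pi_2}^{\Delta_2}$ satisfies $f_2(\mu(A))=f_1(A)$ for every $A\subseteq V(\Delta_1)$. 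To upgrade $\sigma(\mathcal S)$-bounded to $o(\Gamma)$-bounded, pick schemes $\mathcal S_n$ with $\sigma(\mathcal S_n)\to o(\Gamma)$. The maps $f_{\mathcal S_n,\pi}^\Delta$ live in a bounded finite-dimensional space and the set of pairs $(\Delta,\pi)$ is finite, so a diagonal subsequence yields simultaneous limits. Let $F_\Delta$ consist of all these limits. Conditions (a)--(e) are closed linear inequalities, hence survive the limit; the $o(\Gamma)$-bound is immediate; and the extension property transfers because all limits are taken along the same subsequence.

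For part (ii), fix a substructure $\Delta$ and any $f_0\in F_\Delta$. Every $\mu\in\mathrm{Aut}(\Delta)$ is an isomorphism from $\Delta$ onto the substructure $\Delta$ of $\Gamma$, so part (i) supplies $g_\mu\in F_\Gamma$ with $g_\mu(\mu(A))=f_0(A)$ for every $A\subseteq V(\Delta)$. Define
$$
f(B)=\frac{1}{|\mathrm{Aut}(\Delta)|}\sum_{\mu\in\mathrm{Aut}(\Delta)}g_\mu(B).
$$
This is again an $o(\Gamma)$-bounded entropy function for $\Gamma$, since axioms (a)--(e) are linear inequalities preserved under convex combinations and each singleton value of each $g_\mu$ is bounded by $o(\Gamma)$. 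For $\sigma\in\mathrm{Aut}(\Delta)$ and $A\subseteq V(\Delta)$, applying the defining property of $g_\mu$ to the subset $\mu^{-1}\sigma(A)\subseteq V(\Delta)$ gives $g_\mu(\sigma(A))=f_0(\mu^{-1}\sigma(A))$; the substitution $\mu\mapsto\sigma\mu$ then reveals that $f(\sigma(A))=f(A)$, which is the required symmetry on $\Delta$.

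The conceptual core of the argument, and the only genuinely delicate step, is the observation in the first paragraph that on-line schemes depend only on the abstract emerging hypergraph. This makes restriction to a substructure a meaningful operation and makes an isomorphically matched prefix reproduce identical shares; everything else reduces to standard compactness for the $o(\Gamma)$-bound and a routine averaging for the symmetry.
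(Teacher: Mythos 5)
Your proof is correct and follows essentially the same route as the paper: entropy functions for each substructure are generated by running the on-line scheme on all arrival orders of that substructure's vertices, the extension property comes from the isomorphically matched prefix forcing identical share assignments, the infimum is handled by compactness, and symmetry in part (ii) is obtained by averaging the extensions over the automorphism group. Your explicit diagonal-subsequence treatment of the compactness step is a detail the paper only gestures at (citing Matúš), but the argument is the same.
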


\begin{proof}
For (i) let us consider an on-line secret sharing scheme of complexity
$\alpha$ realizing $\Gamma$. For $\Delta\in S(\Gamma)$ we consider all
permutations of the vertices of $\Delta$ and the shares assigned to them
when they arrive in that order. Each assignment yields an $\alpha$-bounded
entropy function for $\Delta$ through equation~(\ref{eq:entropy-method}). We
let $F_\Delta$ be the set of these functions.

To show that the extension
property holds assume $\mu$ is an isomorphism between $\Delta_1\in S(\Gamma)$
and a substructure of $\Delta_2\in S(\Gamma)$, furthermore $f_1\in
F_{\Delta_1}$. Consider the permutation $v_1,\ldots,v_k$ of the vertices of
$\Delta_1$ yielding the entropy function $f_1$ and let $f_2$ be the entropy
function for $\Delta_2$ obtained from a permutation of its vertices starting
with $\mu(v_1),\ldots,\mu(v_k)$ followed by the rest in an arbitrary
order. After the arrival of the first $k$ vertices the situation for the
dealer is the same as when the vertices of $\Delta_1$ arrived in the given
order, so it distributes the same shares. After that she
distributes further shares, but by the definition in
(\ref{eq:entropy-method}) this will not effect the required equality
$f_2(\mu(A))=f_1(A)$ if $A$ is a set of vertices of $\Delta_1$.

This finishes the proof of part (i) in case there is an on-line secret sharing
scheme of complexity $o(\Gamma)$ for $\Gamma$. If no such scheme exists we 
should use compactness again.

For part (ii) consider the sets $F_\Delta$ and $F_\Gamma$ guaranteed by part (i)
and pick an arbitrary entropy function $f_0\in F_\Delta$. Any automorphism
$\mu$ of $\Delta$ is an isomorphism between $\Delta$ and a substructure
(namely $\Delta$) of $\Gamma$, so we have an extension $f_\mu\in F_\Gamma$ with
$f_\mu(\mu(A))=f_0(A)$ for all sets $A$ of vertices in $\Delta$. Let $f$ be the
average of these functions $f_\mu$ for the automorphisms $\mu$ of $\Delta$. It
is easy to see that the linear constraints defining an entropy function are
preserved under taking averages, so $f$ is also an entropy function for
$\Gamma$ and it is also $o(\Gamma)$-bounded like all the functions
$f_\mu$. To see that $f$ is symmetric on $\Delta$ consider an automorphism
$\mu_0$ of $\Delta$ and a set $A$ of vertices of $\Delta$ and notice that
$f(A)$ is the average of $f_\mu(A)=f_0(\mu^{-1}(A))$, while $f(\mu_0(A))$ is
the average of the same values $f_\mu(\mu_0(A))=f_0(\mu^{-1}\mu_0(A))$.
\end{proof}

Note that making an entropy function for $\Gamma$ symmetric on $\Gamma$ is
possible for off-line secret sharing schemes as well. But using
Theorem~\ref{on-ent}(ii) one can make the entropy function symmetric on a
well chosen substructure of $\Gamma$ that may have much more automorphisms
than $\Gamma$ itself.

\begin{theorem}\label{star}
Let the graph $G$ consist of a star with $d\ge2$ edges and $m$ isolated
vertices. Then
$$o(G)\ge d-\frac{d^3-d^2}{2m+2+d^2+d} > d-\frac{d^3}{2m}.$$
\end{theorem}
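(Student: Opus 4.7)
My plan is to apply the on-line entropy method of Theorem~\ref{on-ent}(ii) with the induced substructure $\Delta$ of $G$ consisting of the $d$ leaves $L=\{l_1,\ldots,l_d\}$ and the $m$ isolated vertices $I=\{i_1,\ldots,i_m\}$. Since $\Delta$ has no hyperedges at all, its automorphism group is the full symmetric group $S_{d+m}$, so the theorem yields an $o(G)$-bounded entropy function $f$ for $G$ with $f(A)$ depending only on $|A|$ when $A\subseteq\Delta$. Write $g(k)=f(A)$ for $A\subseteq\Delta,\ |A|=k$; then $g(0)=0$, $g$ is concave by submodularity, and $g(1)\le o(G)$. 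I would then further average $f$ over the graph-automorphism group $\mathrm{Aut}(G)=\mathrm{Sym}(L)\times\mathrm{Sym}(I)$ (which fixes the center $c$), preserving the $\Delta$-symmetry while making $f(A\cup\{c\})$ depend only on $(|A\cap L|,\,|A\cap I|)$.

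Next I would derive the key strict inequality by iterating strict submodularity on the $d$ qualified sets $S_j=\{c\}\cup I\cup\{l_j\}$. All pairwise intersections equal the unqualified set $\{c\}\cup I$, so $d-1$ applications of strict submodularity give
\[
f(V(G)) \le d\,f(S_1) - (d-1)\,f(\{c\}\cup I) - (d-1).
\]
Combined with plain submodularity $f(S_1)\le f(\{c\}\cup I)+g(1)$, the strict monotonicity bound $f(V(G))\ge g(d+m)+1$ (the set $\Delta$ is unqualified while $V(G)$ is qualified), and the submodularity bound $f(\{c\}\cup I)\le f(\{c\})+g(m)\le o(G)+g(m)$, this yields the baseline estimate
\[
g(d+m) - g(m) + d \le o(G) + d\cdot g(1).
\]
To sharpen this into the claimed bound, I would derive a second chain of strict inequalities built from sets of the form $\{c,l_j\}\cup I_j$ for disjoint $I_j\subseteq I$, together with the strict monotonicity bound $f(S_j)\ge g(m+1)+1$ forcing lower bounds on the top increments $g(k+1)-g(k)$ near $k=m$. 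The combinatorial identity $h(d,d)\le h(0,0)+d\,g(2)-(d-1)$, derived by iterating strict submodularity on $\{c,l_j,i_j\}$ with disjoint $i_j$, is a prototype of the kind of relation needed.

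Taking the appropriate linear combination of these inequalities together with the concavity of $g$ and the constraint $g(1)\le o(G)$, I expect to obtain a single inequality of the form $(2m+2+d^2+d)(d-g(1))\le d^2(d-1)$, which rearranges to $o(G)\ge g(1)\ge d-(d^3-d^2)/(2m+2+d^2+d)$; the strict inequality with $d-d^3/(2m)$ then follows by direct arithmetic. The main obstacle, I anticipate, is finding the precise linear combination (an LP-style bookkeeping step) that makes the denominator land at exactly $2(m+1)+d(d+1)$. Intuitively, the term $2(m+1)$ should count two units of constraint from each of the $m+1$ ``candidate center'' vertices (namely $c$ together with the $m$ isolated vertices, all mutually indistinguishable in the on-line view), while the $d(d+1)$ term reflects the strict-submodularity interactions between $c$ and pairs of leaves. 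Balancing these contributions into a tight inequality is the technical heart of the proof.
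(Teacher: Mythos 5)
Your setup coincides with the paper's: apply Theorem~\ref{on-ent}(ii) to the edgeless substructure spanned by the $d$ leaves and the $m$ isolated vertices, so that $f(A)=g(|A|)$ for every center-free set $A$, with $g(1)\le o(G)$ and $f(\{c\})\le o(G)$. Your ``baseline estimate'' $g(d+m)-g(m)+d\le o(G)+d\,g(1)$ is correct, but it is essentially a single instance of the inequality chain the proof needs (the instance taken relative to the full set $I$ of isolated vertices), and on its own it contains no $m$ at all: it only yields $o(G)\ge d/(d+1)$. The entire content of the theorem is the factor $m+1$, and the step that produces it is exactly the one you defer to ``LP-style bookkeeping.'' That is a genuine gap, not a routine computation.

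The missing mechanism is this: one must run the star chain relative to \emph{every} subset $H$ of the isolated vertices of \emph{every} cardinality $k=0,1,\ldots,m$ --- strict submodularity on $f(H\cup V_{i-1}\cup\{c\})+f(H\cup\{l_i,c\})$ for $i=2,\ldots,d$, plain submodularity on $f(H\cup\{l_i\})+f(H\cup\{c\})$ for each $i$, the bound $f(H)+f(\{c\})\ge f(H\cup\{c\})$, and strict monotonicity at $f(H\cup V_d\cup\{c\})$. Adding these $2d+1$ inequalities leaves only center-free sets besides $f(\{c\})$, giving, in terms of the increments $\delta_i=g(i+1)-g(i)$, the relation $(d-1)\delta_k\le\delta_{k+1}+\cdots+\delta_{k+d-1}+d-f(\{c\})$. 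Summing over $k=0,\ldots,m$ telescopes almost all the $\delta_i$, leaving ${d\choose 2}$ boundary terms each at most $g(1)\le o(G)$, whence ${d\choose 2}o(G)\ge(m+1)\bigl(d-f(\{c\})\bigr)\ge(m+1)\bigl(d-o(G)\bigr)$. The $m+1$ thus counts the levels $k$, not ``two units per candidate center'' as you conjecture. Your proposed second chain (disjoint $I_j$ assigned to the leaves, the undefined identity for $h(d,d)$) partitions the isolated vertices among the $d$ leaves and so can produce only on the order of $m/d$ usable strict inequalities; I do not see how it recovers the full $(m+1)$ factor. Without the level-by-level chain and the telescoping sum, the proof is incomplete.
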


\begin{proof}
Let $H$ be the (empty) subgraph of $G$ spanned by all vertices but the degree
$d$ vertex $v$, the center of the star. Let
$f$ be the $o(G)$-bounded entropy function for $G$ that is symmetric on $H$,
the existence of which is claimed by Theorem~\ref{on-ent}(ii). Note
that, by symmetry, $f(A)$ is determined by $|A|$ for sets $v\notin A$, so for
such a set of size $k$ let us have $f(A)=c_k$.

Let $v_1,\ldots,v_d$ be the neighbors of $v$ and
$V_i=\{v_1,\ldots,v_i\}$. Let $H$ be an arbitrary set of isolated vertices.
By strict submodularity (rule e) for $2\le i\le d$ we have
$$f(H\cup V_{i-1}\cup\{v\})+f(H\cup \{v_i,v\})\ge f(H\cup
V_i\cup\{v\})+f(H\cup\{v\})+1.$$ 
By submodularity (rule c) for $1\le i\le d$ we have
$$f(H\cup\{v_i\}+f(H\cup\{v\})\ge f(H\cup\{v_i,v\})+f(H).$$
By rules a and c we have
$$f(H)+f(\{v\})\ge f(H\cup\{v\}),$$
and finally by strict monotonicity (rule d) we have
$$f(H\cup V_d\cup\{v\})\ge f(H\cup V_d)+1.$$
Adding all these $2d+1$ inequalities one obtains
$$\sum_{i=1}^df(H\cup\{v_i\})+f(\{v\})\ge(d-1)f(H)+f(H\cup V_d)+d.$$
All terms except $f(\{v\})$ involve subsets of $H$, so the formula simplifies
to
$$dc_{k+1}\ge(d-1)c_k+c_{k+d}+v-f(\{v\}),$$
where $k=|H|$. Introducing $\delta_i=c_{i+1}-c_i$ we can rewrite our
inequality as
$$(d-1)\delta_k\le
\delta_{k+1}+\delta_{k+2}+\cdots+\delta_{k+d-1}+d-f(\{v\}).$$
Here $k=|H|$ is arbitrary in the range $0\le k\le m$. When we add the $m+1$
corresponding inequalities most $\delta_i$ cancel. Using the bounds
$0\le\delta_i\le c_1$ (coming from monotonicity and submodularity) on the
remaining terms $\delta_i$ we obtain
$${d\choose2}c_1\ge(m+1)(d-f(\{v\})).$$
Finally as $f$ is $o(G)$-bounded we have $c_1\le o(G)$ and $f(\{v\})\le o(G)$
yielding the bound on $o(G)$ stated.
\end{proof}

We use this Theorem to prove Theorems~\ref{thm:1.2}(iii) and \ref{thm:1.3}.

\begin{proof} (Theorem~\ref{thm:1.2}(ii) and (iii))
For part (iii) notice that the graph $G$ consisting a $P_3$ component and
$\lceil n/2\rceil-2$ isolated vertices is an induced subgraph of $P_n$,
which is also an induced subgraph of $C_{n+1}$. Thus we have $o(C_{n+1})\ge
o(P_n)\ge2-4/n$, where the last inequality comes from Theorem~\ref{star}. The
upper bound on the on-line complexity of cycles comes from our general
observation that first-fit is never optimal, as stated in Theorem
\ref{thm:1.4}(i). The proof of this latter statement is  postponed to
Section~\ref{tight}.

The lower bound proved in general establishes $o(P_n)>3/2=\sigma(P_n)$ for
$n\ge9$. To find the exact threshold as claimed in part (ii) it is enough to
prove that $o(P_6)>3/2$ as the longer paths and cycles contain $P_6$ as an
induced subgraph. For this we use Theorem~\ref{on-ent}(ii) with the subgraph $H$
of $P_6$ induced by the first, second, fourth and fifth vertex of the
path. Notice that the automorphism group of $H$ has order $8$. Linear
programming shows that there is no $\alpha$-bounded entropy function on $P_6$
that is symmetric on $H$ with $\alpha<7/4$, thus the theorem tells us that
$o(P_6)\ge7/4$. In the Appendix we give a direct proof of this fact.
\end{proof}

\begin{proof} (Theorem~\ref{thm:1.3})
Consider the graph $G$ consisting of a $d$-edge star and $m$ isolated vertices
and the tree $T$ obtained by adding a vertex to $G$ and connecting it to the
center of the star and to the isolated vertices. Clearly $o(T)\ge o(G)$.
Choosing $d=\lfloor\sqrt n\rfloor$ and
$m=n-d-2$ the tree $T=T_n$ has $n$ vertices and
Theorem~\ref{star} gives the claimed lower bound on its on-line complexity.
\end{proof}

\section{Not so tight bounds on on-line complexity}\label{tight}

%% As we have seen in the previous sections, on-line complexity behaves quite
%%similar to off-line complexity. 
Stinson proved in \cite{stinson} that the 
(worst
case) complexity of any graph is at most $(d+1)/2$ where $d$ is the maximal
degree. This bound was proved to be almost sharp by van Dijk \cite{dijk}
where for
each positive $\eps$ he constructed a graph with complexity at least
$(d+1)/2-\eps$. Later Blundo {\em et al.} \cite{tight} constructed, 
for each $d\ge 2$, an
infinite family of $d$-regular graphs with exact complexity
$(d+1)/2$.

Theorem \ref{thm:1.1} claims that the on-line
complexity
is at most $d$ for a degree $d$ graph, and from Theorem
\ref{star} it follows that this bound is {\em almost tight}, 
namely, for each positive $\eps$ there is a $d$-regular graph with on-line
complexity at least $d-\eps$. In fact, the graph family defined in
\cite{tight} works here as well, as these $d$-regular graphs have no 
triangles and have arbitrarily large independent subsets. These graphs also show that the
on-line and off-line complexity can be far away, which is the conclusion of
Theorem \ref{thm:1.3}. 
%% The performance ratio for these graphs, however, is less than $2$.

In this section we show that
the bound $d$ is never sharp for on-line complexity. In other words, 
the on-line complexity of any access structure is 
always strictly less than the maximal degree. We prove this
result for graph-based structures, and only indicate how the proof can be
modified for arbitrary access structures.

The idea is that during the secret distribution we maintain
some tiny fraction of joint information among any pair of the
participants. This joint information then can be used to reduce the number
of bits the most heavily loaded participant should receive. We shall 
use a technique extending Stinson's decomposition construction from 
\cite{stinson}. 
%For the sake of simplicity we describe our technique
%for the graph case only, however it can be extended to 
%arbitrary access structures as well.

%Let $G$ be a graph. A {\em star} is a connected subgraph of $G$ in 
%which all vertices with at most a single 
%exception has degree one. The high degree vertex of the star is its 
%{\em center}. If the star is a single edge, 
%then either vertex can be considered as the center.
%The non-center points of the star are its {\em leaves}.

A {\it star $k$-cover} of $G$ is a collection $\mathcal S$ of (not 
necessarily distinct) stars $\mathcal{S} = \{S_\alpha\}$ such that 
every edge of $G$ is contained in at least $k$ of the stars. The {\em weight 
of the cover $\mathcal S$}, denoted as $w(\mathcal S)$, is the maximal number 
a vertex of 
$G$ is included in some star (either as a center or as a leaf):
$$
    w(S) = \max_{v\in G} |\{ S_\alpha\in{\mathcal S} \,:\, v\in V(S_\alpha) \} | .
$$

\begin{lemma}[Stinson, {\rm \cite{stinson}}]\label{lemma:decomp}
Suppose $\mathcal S$ is a star $k$-cover of $G$. Then the complexity of $G$
is at most $w(\mathcal S)/k$.
\end{lemma}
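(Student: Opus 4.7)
The plan is to follow Stinson's classical decomposition argument: run an independent ideal secret sharing scheme on each star $S_\alpha \in \mathcal{S}$ and tie the per-star secrets together through a Reed-Solomon (Vandermonde) code. Throughout I would work over a finite field $\mathbb{F}_q$ large enough that distinct labels $\beta_\alpha \in \mathbb{F}_q$ can be chosen for the stars (occurrences) in the multiset $\mathcal{S}$; any prime power $q \ge |\mathcal{S}|$ suffices.

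The scheme: take as secret a uniform $s = (s^{(1)},\dots,s^{(k)}) \in \mathbb{F}_q^{k}$, so $\H(\xi_s) = k\log q$, and define a local secret $y_\alpha = \sum_{j=1}^{k} s^{(j)}\beta_\alpha^{j-1}$ for each star $S_\alpha$. For each $\alpha$ independently draw a uniform $r_\alpha \in \mathbb{F}_q$; give the center of $S_\alpha$ the share $r_\alpha$ and every leaf of $S_\alpha$ the share $r_\alpha + y_\alpha$. A vertex $v$ receives one field element per star (occurrence) containing it, so its share has entropy at most $w(\mathcal{S})\log q$, which immediately gives the desired complexity bound $w(\mathcal{S})/k$.

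Reconstruction is routine. For any edge $e = \{u,v\}$ of $G$ and any $\alpha$ with $e$ an edge of $S_\alpha$, the shares of $u$ and $v$ include both $r_\alpha$ (from the center of $S_\alpha$) and $r_\alpha + y_\alpha$ (from the corresponding leaf), so $y_\alpha$ is recovered. The $k$-cover hypothesis gives at least $k$ such $\alpha$, and since $(y_\alpha)_\alpha$ is the Reed-Solomon encoding of $s$ at the distinct evaluation points $\beta_\alpha$, any $k$ of the $y_\alpha$'s uniquely determine $s$ by polynomial interpolation.

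The main obstacle, as usual, is privacy. I would show that for any independent set $U \subseteq V(G)$, the joint distribution of the shares in $U$ is independent of $s$. The key observation is that for every star $S_\alpha$, $U$ cannot simultaneously contain the center and a leaf of $S_\alpha$, since that would place a $G$-edge inside $U$. Thus $U \cap V(S_\alpha)$ is either empty, consists of the center alone (contributing only $r_\alpha$), or consists only of leaves (all contributing the same value $r_\alpha + y_\alpha$). In each case $U$'s shares from $S_\alpha$ collapse to at most one uniform field element that, thanks to the blinding by $r_\alpha$, is independent of $y_\alpha$. Since the $r_\alpha$'s are independent across stars, the full view of $U$ is a tuple of mutually independent uniform field elements whose joint distribution does not depend on the $y_\alpha$'s, and hence is independent of $s$. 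This finishes the proof.
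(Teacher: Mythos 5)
Your proposal is correct and follows essentially the same route as the paper: the Vandermonde vectors $(1,\beta_\alpha,\dots,\beta_\alpha^{k-1})$ are just a concrete instance of the paper's choice of vectors $\mathbf v_\alpha$ with any $k$ of them spanning the $k$-dimensional space, and your per-star masking with $r_\alpha$ (with the roles of center and leaf harmlessly swapped relative to the paper) plus the interpolation/spanning argument for reconstruction matches the paper's scheme. Your privacy argument is in fact spelled out in more detail than the paper's one-line remark, and it is sound.
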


We present the proof here because our construction will be based on it.

\begin{proof}
Let $\mathbb F$ be a large enough finite field.
We describe a secret sharing construction in which the secret is a $k$-tuple
of elements of $\mathbb F$, and each share is a collection of at most 
$w(\mathcal S)$ elements from $\mathbb F$.
Let $V$ be the $k$-dimensional vector space over $\mathbb F$.
Pick the vector $\mathbf v_\alpha \in V$ for each $S_\alpha \in
\mathcal S$ so that any $k$ of these vectors span the whole $V$. (This can 
be done if the field $\mathbb F$ has at least $|\mathcal S|$ non-zero 
elements.) The set of vectors together with their indices will be public
information, and
they do not constitute part of the secret.
The secret is a uniform random vector $\mathbf s\in V$. For each star $S_\alpha$
in the cover the dealer chooses a random element $r_\alpha\in\mathbb F$, and 
tells $r_\alpha$ (with its index) to the leaves of $S_\alpha$, and she
tells $\langle \mathbf s,\mathbf v_\alpha\rangle-r_\alpha$ to the center
of $S_\alpha$ where $\langle \mathbf s,\mathbf v_\alpha\rangle$ denotes the
inner product of these vectors.

Obviously, in this scheme every participant receives at most $w(\mathcal S)$ 
field elements. The secret consists of $k$ independent field elements
thus the complexity of the system is $w(\mathcal S)/k$, as was
claimed.

It is clear that the vertices of an edge can recover the secret: as
the edge is covered by at least $k$ stars, the two endpoints can recover 
the inner products $\langle \mathbf s,\mathbf v_\alpha\rangle$ for $k$
distinct $\alpha$. As these $\mathbf v_\alpha$ vectors span the whole
space $V$, from these inner products they can recover $\mathbf s$ as well.
On the other hand, any unqualified subset of the vertices receives field
elements that (after removing repetitions) are
independent from each other and from $\mathbf s$.
\end{proof}

Let $G$ be a graph with maximal degree $d\ge2$.
The first on-line secret sharing scheme for $G$ we describe has complexity $d$
but assigns smaller shares for most vertices. This is similar to the
modified Scheme
\ref{scheme:ff-for-graphs} presented at the end of Section~\ref{first-fit}. In
that version of first-fit all vertices receive shares of size strictly less
than the maximum of $d$ times the size of the secret except for the vertices
with $d$ or $d-1$ backward edges. In the scheme we present here only vertices
with $d$ backward edges receive maximal size shares. Recall that when a vertex
$v$ appears we categorize the incident edges as {\em backward} or {\em
forward} depending on whether the edge is revealed at that time (if it
connects $v$ to a vertex that appeared earlier) or will be revealed later.

The simplest way to apply Lemma~\ref{lemma:decomp} is to consider
the collection of stars $\{S_v\,:\, v\in V(G)\}$, where the center of $S_v$ is
$v$ and its leaves are the neighbors of $v$. Clearly, every edge of $G$
appears in exactly two of these stars and the weight of this cover is $d+1$,
so applying Lemma~\ref{lemma:decomp} one obtains $\sigma(G)\le(d+1)/2$.

Our construction can be considered as an on-line implementation of the scheme in
the proof of
Lemma~\ref{lemma:decomp}. For it to work we construct another double cover of
the edges of $G$ with stars and the corresponding shares as we go. We will
maintain that each edge appears in two stars and will have 
%with either endpoint being the
%center of one of them, but we will several but 
at most $d$ stars with center at the same vertex.
Before we start we fix a finite field $\mathbb
F$ (any field with more than $dn$ elements will do, where $n$ is the number
of vertices), and let
$V$ be a two dimensional vector space over $\mathbb F$ and choose $dn$ linearly
independent vectors ${\mathbf v}_\alpha\in V$. The secret $\mathbf s$ is a
\scheme{On-line star packing}{scheme:below-dhalf}
\def\ctr{\mathsf{center}} \def\leaf{\mathsf{leaf}}
When a vertex $v$ appears we see
its backward degree and all of its backward neighbors, but 
don't necessarily know its forward degree. Let $m$ be the number of backward
edges at $v$ and let $m'=\max(1,m)$. We assign $m'$ centers $\ctr_v^i$ for
$1\le i\le m'$ and $d$ leaves $\leaf_v^i$ for $1\le i\le d$ to $v$. For all
the centers we choose a corresponding new vector ${\mathbf v}_\alpha$. For
each backward edge $vw$ we select a distinct center assigned to $v$ and
connect it to an unused leaf at $w$. This determines the value
associated to the center selected as the value at the leaf is already
decided. We also select a distinct leaf at $v$ for each backward edge $vw$ and
connect it to $\ctr_w^1$. Here, too, this determines the value associated to
selected leaf at $v$. The remaining $d-m$ leaves at $v$ and also the one
remaining center in case $m=0$ is assigned to $v$ in anticipation of the
forward edges and are not connected to anything at this point. We select the
associated values independently and uniformly at random for each remaining
leaf or center.
\endscheme
uniform random vector from $V$. As we go we assign ``leaves''
and ``centers'' to the vertices with corresponding field elements in such a
way that elements assigned to distinct stars are independent from each other
and from the secret, all leaves of the same star receive identical elements
and the elements corresponding to the center and a leaf of a star together
determine $\langle{\mathbf s},{\mathbf v}_\alpha\rangle$ as their sum for a
distinct ${\mathbf v}_\alpha$ for each distinct star. The share of a vertex
consists of the values associated to all leaves and centers assigned to this
vertex. Clearly, if we maintain these properties, then we obtain an 
on-line scheme for $G$. Details are in Scheme \ref{scheme:below-dhalf}.

%% the next macro produces a box with sero height and depth properly
%% adjusted
\def\zbox#1{\setbox0\hbox to 0pt{\hss$#1$\hss}\ht0=0pt\dp0=0pt\box0}
\begin{figure}[htp]
\begin{center}
\begin{tikzpicture}[scale=1.2]
\foreach \x in {0,1,2,3,5}{
\draw(\x,0)node{$\bullet$} -- (\x,5.2);
}
\draw(0,0) arc(100:80:2.836);
\draw(1,0) arc(100:80:2.836);
\draw(0,0) arc(115:65:3.517);
\draw(1,0) arc(105:75:3.732);
\draw[dashed](3,0) arc(99.46:90-9.46:6.0828);
\draw[dashed](2,0) arc(105:75:6.0);
\draw[dashed](0,0) arc(110:70:7.4);
\def\x{0.8}\def\xx{3.5}\def\xa{3.0}\def\xb{2.3}
\def\xc{3.9}\def\xd{4.4}\def\xe{5.0}
\draw(0,\x) arc(100:80:2.836);\draw[fill](0,\x) circle (2.5pt); \draw(1,\x)circle (2.5pt);
\draw(0,\x) arc(105:75:5.7);\draw(3,\x) circle (2.5pt);
\draw(0,\x) arc(110:70:7.2);\draw(5,\x) circle (2.5pt);
\draw(0,2*\x) arc(100:80:2.836);
\draw(0,2*\x) circle (2.5pt);\draw(2,2*\x) circle(2.5pt);
\draw[fill](1,2*\x) circle (2.5pt);\draw(3,2*\x)circle(2.5pt);
\draw(1,2*\x)arc(100:80:2.836);\draw(1,2*\x)arc(105:75:3.732);
\draw(1,\xb)circle(2.5pt);\draw[fill](2,\xb)circle(2.5pt);\draw(5,\xb)circle(2.5pt);
\draw(1,\xb)arc(100:80:2.836);\draw(2,\xb)arc(105:75:5.8);
\draw[fill](3,\xa)circle(2.5pt);\draw(1,\xa)circle(2.5pt);\draw(5,\xa)circle(2.5pt);
\draw(1,\xa)arc(105:75:3.732);\draw(3,\xa)arc(99.46:90-9.46:6.0828);
\draw(0,\xx)circle(2.5pt);\draw[fill](3,\xx)circle(2.5pt);
\draw(0,\xx)arc(100:80:8.70);
\draw(0,\xc)circle(2.5pt);\draw[fill](5,\xc)circle(2.5pt);
\draw(0,\xc)arc(100:80:14.2);
\draw(2,\xd)circle(2.5pt);\draw[fill](5,\xd)circle(2.5pt);
\draw(2,\xd)arc(100:80:8.7);
\draw(3,\xe)circle(2.5pt);\draw[fill](5,\xe)circle(2.5pt);
\draw(3,\xe)arc(99.46:90-9.46:6.0828);
\draw(2,\xe-0.2)circle(2.5pt);
%% caption:
\draw(0,-0.4)node{\zbox{x}};\draw(1,-0.4)node{\zbox{y}};
\draw(2,-0.4)node{\zbox{z}};
\draw(3,-0.4)node{\zbox{t}};\draw(5,-0.4)node{\zbox{v}};
\draw(-0.3,0.53)node{$\mathbf c^1$};
\draw(0.73,1.46)node{$\mathbf c^1$};
\draw(1.73,2.16)node{$\mathbf c^1$};
\draw(3.3,2.8)node{$\mathbf c^1$};
\draw(3.3,3.35)node{$\mathbf c^2$};
\draw(5.3,3.8)node{$\mathbf c^1$};\draw(5.3,4.3)node{$\mathbf c^2$};
\draw(5.3,4.8)node{$\mathbf c^3$};
\end{tikzpicture}
\end{center}
\vskip -15pt
\caption{Intermediate stage: next vertex $v$ is connected to $x$, $z$, and
$t$}\label{fig:coverexample}
\end{figure}
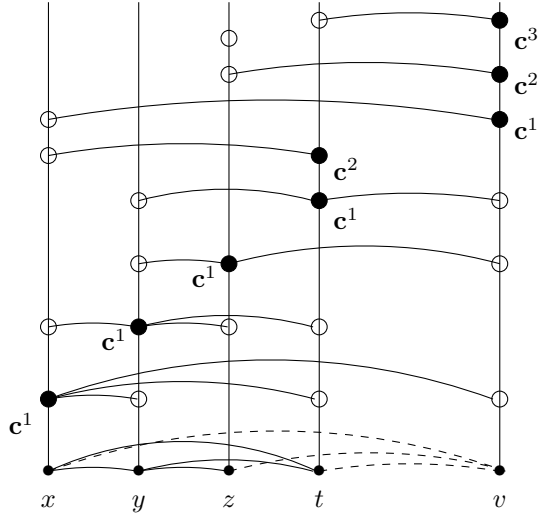

In Figure \ref{fig:coverexample} we illustrate the process for the case
when the maximal degree is $d=3$, and 
vertices $x$, $y$, $z$ and $t$ arrived previously in this order. Shares
given to participants appear as blobs on the vertical lines. 
The
assigned centers are solid dots, and the leaves are hollow ones.
When $t$ arrived, he had two backward edges going to $x$ and
$y$. Thus $t$ got two centers $\mathbf c_1$ and $\mathbf c_2$ (that is, two
indices from the pool of vectors $\mathbf v_\alpha$), and $d=3$ 
leaves. Two of the leaves were
connected to the first centers at $x$ and at $y$ respectively (and $t$ 
received the shares accordingly). 
Then $\mathbf c_1$
was connected to a free leaf of $y$ -- the corresponding share was
the $\langle \mathbf s,\mathbf v_{\alpha_1}\rangle - r_{y,3}$ difference 
where $r_{y,3}$ is the random field value assigned to $y$'s third leaf.
Similarly, and $\mathbf c_2$ was connected to the second 
free leaf of $x$, so the corresponding share was $\langle \mathbf s,
\mathbf v_{\alpha_2}\rangle - r_{x,2}$.
The third leaf of $t$ was free and had not been assigned to any star, 
thus $t$ received a fresh random element from the field as a share.

When $v$ arrives his backward degree turns out to be 3. Thus $v$ will be
assigned three centers and three leaves. The leaves are connected
to the first centers at $x$, $z$, and $t$ correspondingly, and the 
centers are connected to free leaves at those participants. The shares are
generated as above. 
Observe that each edge of the original graph
is covered exactly twice. Thus all qualified subsets can recover at least 
two inner products $\rangle\mathbf v_\alpha,\mathbf s\rangle$. As the vector space has dimension two, it means that they can also
recover the secret $\mathbf s$.

\medskip

Note that in Scheme \ref{scheme:below-dhalf} the share of participant $v$ 
consists of $m'+d$ field elements, so its size is
$(m'+d)/2$ times the size of the secret. Here $(m'+d)/2\le d-1/2$ if
$m<d$. But the complexity of the scheme is still $d$ as vertices with $d$
backward edges receive $2d$ field elements.

To apply this scheme one doesn't have to know the structure of $G$, it
is enough to know the size $n=|V(G)|$ and the maximal degree $d$. The same is
true for the more complicated scheme 
%%with complexity strictly below $d$ 
to be described below.

To push the complexity strictly
below $d$ we need to decrease the information given to vertices
of backward degree $d$ at the expense of adding further information to
all other vertices.

\begin{theorem}\label{thm:graph-tight}
Let $G$ be a graph on $n$ vertices with maximal degree $d\ge 2$. The on-line
complexity of $G$ is at most $d-1/(2dn)$.
\end{theorem}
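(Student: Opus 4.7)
My plan is to refine Scheme~\ref{scheme:below-dhalf} so as to save one field element from the share of every vertex that arrives with backward degree $d$. First, enlarge the secret: work over a finite field $\mathbb F$ with $|\mathbb F| \ge 2d^2 n$, let the secret be a uniform random vector $\mathbf s \in \mathbb F^{2dn}$, and run $dn$ independent copies of Scheme~\ref{scheme:below-dhalf} in parallel. This yields an on-line star $2dn$-cover of $G$ of weight at most $2d^2n$, hence an on-line scheme of complexity $d$ by Lemma~\ref{lemma:decomp}; moreover, only vertices of backward degree $d$ attain the worst-case weight $2d^2n$.

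The saving at a vertex $v$ of backward degree $d$ leverages two facts: all $d$ neighbors of $v$ are already present when $v$ arrives, and since $G$ is a graph, any unqualified subset containing $v$ excludes all of these neighbors. The plan is, upon $v$'s arrival, to pick a distinguished vector $\mathbf v_0 \in \mathbb F^{2dn}$ and to choose the new star vectors $\mathbf v_\alpha$ at $v$ so that $\mathbf v_0$ lies in the span of the edge-covering vectors for every backward edge $\{v, u_i\}$ simultaneously. Then every backward neighbor $u$ can, together with $v$, compute $\langle \mathbf s, \mathbf v_0\rangle$ from the two stars covering $\{u,v\}$. The scheme therefore omits one field element (encoding $\langle \mathbf s, \mathbf v_0\rangle$) from $v$'s share: any qualified pair $\{u,v\}$ recovers $\mathbf s$ from $v$'s remaining $2d^2n-1$ elements together with $u$'s share via the engineered relation, while any unqualified subset containing $v$ gains no new information because, excluding every neighbor of $v$, it cannot access the relation. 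The worst-case share drops to $2d^2n - 1$ field elements while the secret size stays at $2dn$, giving complexity $(2d^2n-1)/(2dn) = d - 1/(2dn)$.

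The main obstacle is constructing the required star-vectors on-line, before the identity of $v$ (or even which vertex will be heavy) is known. I plan to handle this by building flexibility into every earlier vertex's assignment: when a vertex $u$ arrives, the dealer retains an extra degree of freedom in $u$'s star vectors that can be consumed later, when a heavy $v$ adjacent to $u$ arrives, to bring $\mathbf v_0$ into the required simultaneous span across the $d$ backward edges at $v$. Verifying that this flexibility is sufficient for every possible arrival order, and that the engineered dependence never leaks information to any unqualified subset, will be the bulk of the work; the field size hypothesis $|\mathbb F| \ge 2d^2 n$ provides enough room for the remaining vectors to be in general position so that all other rank conditions from Lemma~\ref{lemma:decomp} stay intact.
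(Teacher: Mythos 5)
Your high-level strategy (parallel repetition of Scheme~\ref{scheme:below-dhalf}, saving a single field element at each vertex of backward degree $d$, and absorbing the overhead so that the worst case drops to $2d^2n-1$ elements against a secret of $2dn$ elements) is the same as the paper's, and your choice of $k=dn$ copies is the right one. However, the mechanism you propose for the saving does not work, and the gap is exactly at the step you defer as ``the bulk of the work.'' The star vectors $\mathbf v_\alpha$ are \emph{public} data; the elements actually held by the participants are masked values: a leaf at $u$ holds an independent uniform $r_\alpha$, and the matching center holds $\langle\mathbf s,\mathbf v_\alpha\rangle-r_\alpha$. For the heavy vertex $v$ to serve two backward edges $\{v,x\}$ and $\{v,y\}$ with a single field element $c$, you need $c+r_x$ and $c+r_y$ to both be inner products of $\mathbf s$ with prescribed vectors, which forces $r_x-r_y$ to equal $\langle\mathbf s,\mathbf v_1-\mathbf v_2\rangle$; since $r_x$ and $r_y$ were distributed independently before $v$ arrived, no choice of public vectors (no amount of ``retained degrees of freedom'' in them) can create this relation. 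The saving requires \emph{correlated randomness between $x$ and $y$} -- concretely, a star with one center at $v$ and two leaves, one at $x$ and one at $y$, holding the \emph{same} random field element. Because the dealer cannot retroactively correlate shares already handed out, and does not know in advance which pairs of neighbors will be needed, she must pre-distribute identical random values to \emph{every} pair of vertices (with multiplicity $d$, since a pair $\{x,y\}$ can be the chosen pair for up to $d$ heavy vertices). This is the missing ingredient.

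This missing ingredient also breaks your cost accounting. The pre-shared values add $d(n-1)$ extra field elements to the share of every vertex of backward degree less than $d$, so such a vertex holds up to $d(n-1)+(2d-1)k$ elements, and one must check this stays below the heavy vertices' $2dk-1$; that inequality is precisely what forces $k\ge d(M+\cdots)$, i.e.\ $k=dn$, and yields the bound $d-1/(2k)=d-1/(2dn)$. Your write-up asserts the worst-case share is $2d^2n-1$ without accounting for any overhead at the light vertices, so even granting a working mechanism, the claimed complexity is not yet justified. Finally, note that your ``common vector $\mathbf v_0$ in every span'' condition is vacuous in the underlying two-dimensional geometry (each edge's two covering vectors already span everything) and, even in the $2dn$-dimensional reformulation, giving every pair $\{u_i,v\}$ access to one extra inner product does not by itself let $v$ discard any of its masked elements.
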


\begin{proof}
We modify the above construction to achieve the lower complexity. Let $k$ be
a large integer to be chosen later. We execute in parallel $k$ independent
copies of the  secret distributing procedure above. Namely, the secret is $k$
independent uniform vectors ${\mathbf s}^1,\ldots,{\mathbf s}^k$ from the two
dimensional vector space $V$ and ${\mathbf s}^i$ is determined by the shares of a
qualified subset in copy $i$ of the game above. This process multiplies the
size of the secret as well as the size of the shares by $k$ so it does not
alter the complexity of the scheme. Now we modify this combined scheme as
follows.

For each pair $\{v,w\}$ of vertices if neither of them has
backward degree $d$ (but
regardless whether they form an edge or not) we assign $d$ special leaves
with identical %% whose independent
random field elements independent of each other and of all other choices. We
do this by assigning $d(n-1)$ field elements to each vertex with backward
degree less than $d$ out of which values we choose $d$ -- $d$ from unused values
assigned to each earlier such vertex and we select the rest uniformly at
random. We treat these values as unused leaves.

Suppose $v$ is a vertex of backward degree $m<d$. Then we assign
$kd$ leaves and $km'$, $m'=\max(m,1)$ centers to $v$ in each copy and handle them
exactly as in scheme \ref{scheme:below-dhalf}. In addition we also assign
$d(n-1)$ special leaves to $v$ whose corresponding field elements are shared
by some other special leaf. Namely, if we have $t$ earlier vertices with
backward degree less than $d$, then we select $d$ unused special leaves
from each such vertex and make the same random value correspond to the first
$dt$ special leaves at $v$. For the remaining special leaves we select uniform
random values. In total $v$ is assigned $km'$ centers, $kd$ normal leaves and
$d(n-1)$ special leaves for a total of $km'+kd+d(n-1)$ field elements in the
share.

Next suppose $v$ is a vertex with backward degree exactly $d$. In this case we
modify just one copy of scheme \ref{scheme:below-dhalf} as follows. In the 
other $k-1$
copies of the original scheme we assign $d$ centers and $d$ leaves to $v$
each,
but in
the modified copy we assign $d$ leaves but only $d-1$ centers, one of which we
call ``special.'' All leaves and
centers are handled as in the original scheme except the special center is
connected to two leaves to take care of two backward edges. Let $x$ and $y$
be any two neighbors of $v$. We select two special leaves, one at $x$ another at
$y$ that are not participating in a star yet but which correspond to the same
random field element. We connect these special leaves to the special center at
$v$ to form a two edge star and this determines the value corresponding to the
special center. Note that this is only possible because we have leaves at
$x$ and $y$ sharing the corresponding field element as all the leaves of any
star should share the same value.
As the maximal degree is $d$, the vertex pair $\{x, y\}$ can occur at most
$d$ times in this process, thus there will always be a new pair of special
leaves to choose from. We do not assign any special leaves
to $v$ so the share of $v$ consists of $2dk-1$ elements of $\mathbb F$. 

It is clear that the scheme described is an on-line secret sharing scheme
for $G$. The secret can be written as $2k$ independent field elements.
A vertex with less than $d$ backward edges receives at most
$d(n-1)+(2d-1)k$ field elements, and a vertex with exactly $d$ backward edges 
receives $2dk-1$ field elements. Thus the complexity of the scheme is
$$
    \frac{2dk-1}{2k} = d-\frac1{2k}
$$
if $d(n-1)+(2d-1)k\le 2dk-1$, which is the case when $k=dn$. This proves the
theorem.
\end{proof}

\medskip

As the complexity of any nontrivial access structure is at least $1$, from
Theorem \ref{thm:graph-tight} it follows immediately that the performance
ratio is at most $d-1/(2dn)$ for any graph-based structure with maximal
degree $d$. This was claimed as part (i) of Theorem \ref{thm:1.4}.

\bigskip

A generalization of Theorem \ref{thm:graph-tight} for arbitrary access
structure was stated as Theorem \ref{thm:1.5}. In the construction we 
will use a bound on the number of elements in minimal qualified subsets. 
When $\Gamma$ is graph based, this bound is $2$, but in general it can 
be any number $r\le n$. As usual, $d$ denotes the maximal degree of
$\Gamma$.

Stinson's Lemma \ref{lemma:decomp}
easily generalizes for hypergraphs as follows.

\begin{lemma}
Consider a hypergraph $\Gamma$ and a system $(S_\alpha,v_\alpha)_{\alpha\in
A}$, where $S_\alpha$ is a subset of the hyperedges of $\Gamma$ and $v_\alpha$
is a vertex of $\Gamma$. Assume each hyperedge in $\Gamma$ appears in at least
$k$ of the sets $S_\alpha$ and define the weight of a vertex $x$ as
$$w(x)=\sum_{x=v_\alpha}1+\sum_{x\ne v_\alpha}|\{H\in S_\alpha\,:\,x\in
H\}|,$$
where the summations are for $\alpha\in A$. Then
$\sigma(\Gamma)\le\max_xw(x)/k$. \hfill \qed
\end{lemma}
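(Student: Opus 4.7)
The plan is to generalize Stinson's construction from the proof of Lemma~\ref{lemma:decomp} to the hypergraph setting. I will view each pair $(S_\alpha, v_\alpha)$ as a ``hyper-star'' centered at $v_\alpha$ with the hyperedges in $S_\alpha$ playing the role of its edges; as in Stinson's setup, I will assume implicitly that $v_\alpha \in H$ for every $H \in S_\alpha$, which is the natural hypergraph analogue of ``edges of a star pass through its center.''

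First, over a sufficiently large finite field $\mathbb F$, I would choose vectors $\mathbf v_\alpha$ in the $k$-dimensional space $V = \mathbb F^k$ so that any $k$ of them span $V$, and take the secret $\mathbf s$ uniformly from $V$. Then for every $\alpha$ I pick an independent uniform $r_\alpha \in \mathbb F$ and hand it to $v_\alpha$ as a single share; and for every pair $(\alpha,H)$ with $H \in S_\alpha$ I independently additively share the scalar $\langle \mathbf s, \mathbf v_\alpha\rangle - r_\alpha$ among the members of $H \setminus \{v_\alpha\}$ (pick $|H|-1$ random field elements summing to that value and hand one to each member). A direct count then shows that each vertex $x$ receives exactly $w(x)$ field elements and that the secret has entropy $k\log|\mathbb F|$, yielding complexity at most $\max_x w(x)/k$. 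Correctness is immediate: any hyperedge $H$ lies in at least $k$ of the $S_\alpha$'s, and for each such $\alpha$ its members recover $\langle \mathbf s, \mathbf v_\alpha\rangle$ by summing $r_\alpha$ (held by $v_\alpha \in H$) with the $|H|-1$ additive shares held by the remaining members of $H$; since any $k$ of the vectors $\mathbf v_\alpha$ span $V$, the secret $\mathbf s$ is determined.

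The main obstacle is the security argument. The key observation I would record is that for a fixed $\alpha$ an unqualified set $T$ can reconstruct the masked value $\langle \mathbf s, \mathbf v_\alpha\rangle - r_\alpha$ from the $(\alpha,H)$ shares only when $T \supseteq H \setminus \{v_\alpha\}$ for some $H \in S_\alpha$, and $T$ can learn $r_\alpha$ only when $v_\alpha \in T$; together these two facts would force $T$ to contain some whole $H \in S_\alpha$, contradicting that $T$ is unqualified. The subtle point is that $T$ might reconstruct $\langle \mathbf s, \mathbf v_\alpha\rangle - r_\alpha$ for the same $\alpha$ from several different hyperedges in $S_\alpha$, but all such reconstructions produce the same scalar, so no extra information leaks; combining the analysis across distinct $\alpha$ relies on the independence of the $r_\alpha$'s and of the different additive sharings to conclude that the joint distribution of $T$'s shares is statistically independent of $\mathbf s$.
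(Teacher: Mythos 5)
The paper gives no proof of this lemma (it is stated with an immediate \qed as an ``easy generalization'' of Lemma~\ref{lemma:decomp}), and your construction is exactly the intended one: independent additive sharings of a masked inner product over each hyperedge of each generalized star, with the share count matching $w(x)$ term by term and the security argument reducing to the independence of the masks $r_\alpha$ and of the sharing randomness. That part is correct and complete.

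The one place where your proof falls short of the statement is the assumption, which you flag yourself, that $v_\alpha\in H$ for every $H\in S_\alpha$. The lemma does not require this, and the paper's own application does not guarantee it either: to pad each hyperedge up to multiplicity $r$ it adds singleton sets $S_\alpha=\{H\}$ ``with an arbitrary center,'' which may lie outside $H$. For such a pair your scheme breaks \emph{correctness}, not security: the members of $H$ recover only the masked value $\langle\mathbf s,\mathbf v_\alpha\rangle-r_\alpha$, while $r_\alpha$ sits with $v_\alpha\notin H$, so they cannot extract $\langle\mathbf s,\mathbf v_\alpha\rangle$ and may end up with fewer than $k$ usable inner products. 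The patch is one line: whenever $v_\alpha\notin H$, additively share $\langle\mathbf s,\mathbf v_\alpha\rangle$ itself (unmasked) among all members of $H$. This still charges each $x\in H$ exactly one field element, consistent with the definition of $w(x)$, and security is preserved because only a set containing all of $H$ --- hence a qualified set --- can reconstruct that sharing, while any proper subset of the summands is uniform and independent of everything else. With that case added (and, pedantically, the degenerate case $H=\{v_\alpha\}$ handled by handing $\langle\mathbf s,\mathbf v_\alpha\rangle$ directly to $v_\alpha$), your argument proves the lemma in the stated generality.
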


Here we consider $S_\alpha$ as generalized stars with center $v_\alpha$. If
all hyperedges of a hypergraph $\Gamma$ have size at most $r$, then one can
consider the collection $(S_v,v)_{v\in V(G)}$, where $S_v$ is the set of
hyperedges containing $v$ together with singleton $S_\alpha$ (with an
arbitrary center) so that each hyperedge appears exactly $r$ times. The lemma
above applied to this system gives $\sigma(\Gamma)\le d-(d-1)/r$, where
$d=d(\Gamma)$ is the maximal degree. As in the case of graphs, the direct
approach to turn this scheme into an on-line scheme increases the complexity
to $d$ but only vertices with $d$ backward hyperedges receive maximal
size shares.

We only need to lower the load on participants with $d$ backward
hyperedges. Let $v$ be such a participant, and $A$ be a minimal qualified
set $v$ is in. Then $v$ gets a field element so that the sum of this
and other elements preassigned to other participants in $A$ yields the secret
value. Now $v$'s load can be lowered if he can receive the same field 
element for two different minimal qualified subsets $A_1$ and $A_2$. 
Thus we need randomly assigned numbers to $A_1-\{v\}$ and to $A_2-\{v\}$ so
that their sum be equal. Such a thing can be found if for all disjoint
subsets $U$ and $V$ of the participants with $|U|<r$, $|V|<r$ we maintain
$d$ such sums, plus $d$ further random values to be used in $(A_1\cap
A_2)-\{v\}$. These random field elements will be assigned (with appropriate
labels) to members of $U$ and $V$. 

Let $M$ be the number of the $(U,V)$ pairs a particular participant is in
either $U$ or $V$. An easy calculation shows that $M \le \min (r\cdot
n^{2r-3},3^{n-1})$. Then each participant, except for those with backward
degree $d$, will receive $d(M+d)$ extra field elements. If we execute $k$
copies of the on-line scheme in parallel, then participants with less than
$d$ backward degree receive at most $k\cdot(dn-1)+d\cdot(M+d)$ field
elements; those with exactly $d$ backward degree receive $k\cdot(dn)-1$
field elements. The secret in this case will be $kn$ field elements, thus
the complexity of the scheme is $d-1/(kn)$ if
{\setlength\arraycolsep{0.15em}
\begin{eqnarray*}
   k\cdot(dn-1)+d\cdot(M+d)&\le& k\cdot(dn) - 1\\[2pt]
    d\cdot(M+d)+1 &\le& k.
\end{eqnarray*}}%%
Choosing the smallest possible value for $k$ gives the complexity in Theorem
\ref{thm:1.5}.

\section{Conclusion}

In this paper we defined the notion of {on-line secret sharing scheme}, as
an extension of the classical, off-line schemes. Given a set $P$ of
participants, the dealer meets the participants one by one, and learns only
the partial structure generated by participants who show up so far. In spite
of this, final and irrevocable shares should be assigned to each
participant. The question we investigated was how much worse does an on-line
scheme perform compared to the best off-line one? 

We defined a universal on-line scheme which we called {\em first-fit} in
strong resemblance to the first-fit on-line graph coloring algorithm. Its
complexity is the maximal degree of the realized access structure, thus its
efficiency is comparable to the most efficient known general off-line
schemes.

We looked at several graph-based access structures, and found that quite
often the on-line and off-line complexities were not only close to each
other, but actually they were equal. We could separate these complexities 
by showing that
for paths on at most 5 vertices, and cycles on at most 6 vertices these
complexities are, in fact, equal. For other paths and cycles the on-line
complexity is strictly greater than the off-line (Theorem \ref{thm:1.2}. 
Nevertheless the ratio between the complexities is always less than 4/3.

For trees this {\it performance ratio} can be much larger. In fact, there is
a tree $T_n$ on $n$ vertices where this ratio is at least $\sqrt n /4$, as
proved in Theorem \ref{thm:1.3}. If the {\it maximal degree} of the access
structure is constant (say at most 10), then the situation is much better.
By Theorem \ref{thm:1.4} in this case on-line schemes are at most 10 times
more expensive than off-line ones (but very probably much less). This result
follows from the performance of the first-fit scheme.

In the last section of this paper we showed that the first-fit scheme never
optimal. For general access structure we could improve it by an exponentially
small amount only. We pose it as an open question how much this bound can be
lowered.

On-line schemes can be generated from off-line ones when the access
structure is {\it fully symmetrical}, see Claim \ref{claim:threshold}. As we
have remarked, threshold structures and complete multipartite graphs are
induced subgraphs of fully symmetrical structures, thus for them the on-line
and off-line complexities are the same. We also know that $C_5$, the cycle
on 5 points is fully symmetrical as well as the Petersen graph on Figure
\ref{fig:3cube-petersen}. An independent research problem is to
determine all fully symmetrical graphs.

Finally, we would be very much interested in computing the exact on-line 
complexity for any other graph or structure.

\section*{Acknowledgment}
The first author would like to thank Carles Padr\'o and Ronald Cramer for
their hospitality, support, and
the invitation to the RISC@CWI Conference on Combinatorics in Secret
Sharing, where the half-cooked ideas of on-line secret sharing were first
presented \cite{on-line}.

\section*{Appendix}

We give a direct proof of the fact that the on-line complexity of the path 
$P_6$ is
at least 7/4. We are using the technique discussed in Section
\ref{entropy}. Let us denote the vertices along the path $P_6$ by $a$,
$b$, $x$, $a'$, $b'$, and $y$, and let the subgraph $H$ be induced by the
vertices $a$, $b$, $a'$ and $b'$. $H$ is a matching consisting of two edges
and its automorphism group has order 8. Let $f$ be a $H$-symmetric entropy
function for $P_6$. By Theorem \ref{on-ent}(ii) it is enough to show that $f$
takes a value at least $7/4$ on some singleton.

Our starting point is the inequality
\begin{equation}\label{eqappx:2}
    f(aa'b') - f(a) \ge 3.
\end{equation}
This is well-known generalization of the inequality from \cite{BSGV}, and
follows from the fact that $a$ is not connected to any vertex of the spanned
path $xa'b'y$.

Strict submodularity and strict monotonicity yields
{\setlength\arraycolsep{0.15em}\begin{eqnarray*}
   f(bx) + f(xa') &\ge& f(ba'x) + f(x) + 1 \\
   f(ba'x) &\ge& f(ba')+1.
\end{eqnarray*}}%%
Using these together with $f(b)+f(x)\ge f(bx)$, $f(x)+f(a')\ge f(xa')$ we get
\begin{equation}\label{eqappx:1}
    f(b)+f(a')+f(x)\ge f(ba') + 2.
\end{equation}
As $f$ is $H$-symmetric, $f(aa')=f(ab')=f(ba')$, and
$f(a)=f(b)=f(a')=f(b')$, furthermore, by submodularity and by
(\ref{eqappx:2}),
$$
    f(aa')+f(ab') \ge f(a)+f(aa'b') \ge f(a) + (f(a)+3).
$$
Plugging this into (\ref{eqappx:1}), we get
$$
    f(b)+f(a')+f(x) \ge 2+\frac{2f(a)+3}2,
$$
from where $f(a)+f(x)\ge 7/2$. Therefore either $f(a)$ or $f(x)$ is at least
7/4, as was required.

\end{document}